\documentclass[conference]{IEEEtran}
\IEEEoverridecommandlockouts

\usepackage{amsmath, amssymb, amsthm}
\usepackage{graphicx}
\usepackage{booktabs}
\usepackage{nicefrac}
\usepackage{siunitx}
\usepackage{subcaption}
\usepackage{balance}
\usepackage{multirow}
\usepackage{tabularx}
\usepackage{makecell}
\usepackage{enumitem}
\usepackage{xcolor}
\usepackage{xurl}
\usepackage{pifont}
\usepackage{textcomp}
\usepackage{xspace}

\usepackage[linesnumbered,ruled,vlined,norelsize]{algorithm2e}

\usepackage{hyperref}

\usepackage{fancyhdr}
\newtheorem{definition}{Definition}
\newtheorem{theorem}{Theorem}
\newtheorem{lemma}{Lemma}

\def\BibTeX{{\rm B\kern-.05em{\sc i\kern-.025em b}\kern-.08em
    T\kern-.1667em\lower.7ex\hbox{E}\kern-.125emX}}

\begin{document}

\DeclareRobustCommand{\hym}[1]{#1}
\DeclareRobustCommand{\ym}[1]{#1}
\DeclareRobustCommand{\xht}[1]{#1}
\newcommand{\tj}[1]{#1}
\newcommand{\tjr}[1]{#1}
\newcommand{\tbd}[1]{#1}
\DeclareRobustCommand{\rTwo}[1]{#1}
\DeclareRobustCommand{\rThree}[1]{#1}
\DeclareRobustCommand{\rFour}[1]{#1}
\DeclareRobustCommand{\rFive}[1]{#1}
\DeclareRobustCommand{\meta}[1]{#1}

\newcommand{\jing}[1]{}
\newcommand{\unsure}[1]{}
\newcommand{\delete}[1]{} 
\newcommand{\tjq}{}
\newcommand{\pending}{}

\newcommand{\revise}[1]{#1}

\newcommand{\eat}[1]{\ignorespaces}  
\newcommand{\spara}[1]{\vspace{1.5mm}\noindent\textbf{#1.}}
\newcommand{\originalcomment}[1]{\textit{#1}}

\newcommand{\compilehidecomments}{true}

\title{PRIME: Efficient Algorithm for Token Graph Routing Problem\thanks{A short version of the paper will appear in the 42nd IEEE International Conference on Data Engineering (ICDE '26), May 4--8, 2026.}}

\makeatletter
\newcommand{\linebreakand}{%
  \end{@IEEEauthorhalign}
  \hfill\mbox{}\par
  \mbox{}\hfill\begin{@IEEEauthorhalign}
}
\makeatother
\author{
    \IEEEauthorblockN{Haotian Xu}
    \IEEEauthorblockA{The Hong Kong University of Science and Technology (Guangzhou)\\ hxu496@connect.hkust-gz.edu.cn}
    \and
    \IEEEauthorblockN{Yuqing Zhu}
    \IEEEauthorblockA{Nanyang Technological University\\ yuqing002@e.ntu.edu.sg}
    \linebreakand
    \IEEEauthorblockN{Yuming Huang}
    \IEEEauthorblockA{National University of Singapore\\ huangyuming@u.nus.edu}
    \and
    \IEEEauthorblockN{Jing Tang\IEEEauthorrefmark{1}\thanks{\IEEEauthorrefmark{1}Corresponding author: Jing Tang.}}
    \IEEEauthorblockA{The Hong Kong University of Science and Technology (Guangzhou)\\ jingtang@ust.hk}
}

\maketitle
\thispagestyle{fancy} 

\begin{abstract}
Optimizing asset exchanges on blockchain-driven platforms poses a novel and challenging graph query optimization problem. \meta{In this model, assets represent vertices and exchanges form edges, recasting the graph query task as a routing problem over a large-scale, dynamic graph. However, the existing solutions fail to solve the problem efficiently due to the non-linear nature of the edge weights defined by a concave swap function.} To address the challenge, we propose PRIME, a two-stage iterative graph algorithm designed for the Token Graph Routing Problem (TGRP). The first stage employs a pruned graph search to efficiently identify a set of high-potential routing paths. The second stage formulates the allocation task as a strongly convex optimization problem, which we solve using our novel Adaptive Sign Gradient Method (ASGM) \rFour{with} a linear convergence rate. Extensive experiments on real-world Ethereum data confirm PRIME's advantages over industry baselines. PRIME consistently outperforms the widely-used Uniswap routing algorithm, achieving up to 8.42 basis points (bps) better execution prices on large trades while reducing computation up to 96.7\%. The practicality of PRIME is further validated by its deployment in hedge fund production environments, demonstrating its viability as a \meta{scalable graph query processing} solution for high-frequency decentralized markets.
\end{abstract}

\begin{IEEEkeywords}
Graph Routing, Convex Optimization.
\end{IEEEkeywords}

\section{Introduction}
\label{sec:intro}
\subsection{Background}
\label{subsec:background}
Blockchain technology has forged a paradigm shift in financial infrastructure, culminating in a digital asset market exceeding $3.7$ trillion as of September 2025. Central to this ecosystem are decentralized exchanges (DEXs), which are programs operated on the blockchain and now captures over $20\%$ of global trading activity~\cite{milkroaddaily2025}. However, the rapid evolution of these protocols leads to the dispersion of liquidity across different trading pairs~\cite{lehar2023liquidity}. In this fragmented liquidity landscape, a fundamental challenge arises: given a source asset and a target asset, how to route funds through a network of intermediate liquidity pools to maximize the output amount? This challenge is formalized as the Token Graph Routing Problem (TGRP). With daily volumes on platforms like Ethereum exceeding billions of dollars~\cite{defillamadex}, efficient TGRP solutions are not merely algorithmic improvements but direct drivers of economic efficiency.

\meta{
Fundamentally, TGRP extends the Generalized Network Flow (GNF) problem~\cite{ahuja1993network}. Unlike standard network flow where the flow amount remains constant across edges, GNF assigns a gain factor $\gamma(e)$ to each edge, representing the exchange rate. However, TGRP introduces a unique property that distinguishes it from classical GNF: concavity. While classical GNF algorithms assume constant gain factors, TGRP's edge weights are defined by concave functions. In these settings, the effective gain is not static but dynamically decreases as input flow increases—a phenomenon known as price slippage. Consequently, TGRP constitutes a Convex Generalized Network Flow Maximization problem. This convexity renders traditional linear programming or combinatorial GNF algorithms inapplicable, necessitating a novel algorithmic approach capable of handling concave edge weights efficiently.
}

\rThree{
Beyond theoretical complexity, practical TGRP optimization faces severe structural and numerical constraints that defy existing query processing techniques. The primary hurdle is the challenge of scalability, as the graph scale is massive; for instance, a single DEX like Uniswap contains over 400,000 unique tokens and 380,000 liquidity pools~\cite{uniswapv2core}. \rTwo{Second, the ecosystem suffers from extreme numerical heterogeneity, where asset prices can differ by orders of magnitude (e.g., BTC at $10^{5}$ USD vs. SHIB at $10^{-6}$ USD~\cite{coingecko})}. This disparity leads to significant numerical instability in traditional optimization methods~\cite{github_issue_29}. Finally, these complexities must be resolved within strict latency bounds. Therefore, as market states change rapidly, these limitations make the problem a challenging graph query optimization problem and existing approaches are ill-equipped.
}

\subsection{Contribution}

To address the aforementioned challenges, we introduce PRIME, a novel algorithmic framework designed to solve the Token Graph Routing Problem (TGRP) in real-world DEX environments. To overcome the scalability challenge, we leverage the observation that liquidity in real-world markets often exhibits a clustering effect~\cite{gabaix2009power}, where a small subset of assets captures the majority of trading volume. Consequently, centering the search around high-liquidity assets yields high efficiency. PRIME pre-processes the massive token graph into two components: a Core Graph and a Shortcut Index. The Core Graph includes the primary assets and captures the main routing backbone. The Shortcut Index is utilized to record superior pricing paths that traverse non-core tokens. This decomposition allows PRIME to capture the most favorable prices across the entire ecosystem without the overhead of an exhaustive search.

\rTwo{
To tackle the computational complexity introduced by non-linear, non-additive concave swap functions, we design an iterative path discovery mechanism motivated by classical primal-dual methods for Generalized Network Flows~\cite{truemper1973optimal}. Our approaches dynamically
constructs the solution set by iteratively identifying ``generalized augmenting paths''---routes that offer superior marginal price compared to the current price threshold. We implement this via a specialized Breadth-First Search Pruning algorithm that filters edges based on dynamic liquidity thresholds. This method efficiently narrows the search space during exploration, achieving a time complexity of $O(|V| \cdot |E|)$, where $|V|$ and $|E|$ are the number of vertices (tokens) and edges.

To mitigate the numerical instability caused by extreme value differences, we propose an Adaptive Sign Gradient Method (ASGM). ASGM is a gradient-based optimization method that is robust to extreme value differences. It uses only the sign of the gradient for determining the reallocation direction and adaptively adjusts the step size using the Armijo condition~\cite{armijo1966minimization}. This design ensures robustness against extreme price disparities that destabilize traditional solvers~\cite{github_issue_29}. Furthermore, we prove that our allocation problem is strongly concave, allowing ASGM to guarantee a linear convergence rate. Collectively, these innovations enable PRIME to efficiently solve the TGRP in dynamic, real-world DEX environments.

}

To validate PRIME's effectiveness, we conduct extensive experiments using real-world Ethereum data against industry baselines like the Uniswap Smart Order Router (SOR~\cite{uniswap_routing}) in different market periods. Additionally, we compare PRIME against a theoretical optimum derived from a global convex solver, confirming that PRIME produces near-optimal outputs with negligible deviation. In summary, our contributions are as follows:

\begin{enumerate}[topsep=2mm, partopsep=0pt, itemsep=1mm, leftmargin=18pt]
    \item We formalize the Token Graph Routing Problem (TGRP) as a novel graph query optimization problem where path values are the composition of non-linear concave functions. This formulation accurately aligns with real-world token swap transactions in the blockchain. 
    \item We introduce PRIME, a novel iterative framework featuring a graph pre-processing technique, an efficient path discovery algorithm (a BFS-based pruning method with $O(|V| \cdot |E|)$ complexity) and a robust optimization algorithm with a linear convergence rate.
    \item We perform extensive evaluations on real-world datasets. The results demonstrate that PRIME achieves superior scalability and efficiency compared to classical graph algorithms and industry heuristics, offering a viable solution for high-frequency decentralized markets.
\end{enumerate}

\eat{
The core economic principle under our approach is that the optimal output is achieved when the marginal prices across all utilized paths are equalized. At this equilibrium, no further gains can be made by reallocating input. PRIME decomposes the problem into two manageable sub-problems: first, efficiently discovering high-potential paths, and second, developing a robust method to optimally allocate funds across these chosen paths. By iteratively seeking and incorporating improvements indicated by marginal price signals, PRIME continuously refines the trading solution to maximize the final output amount.

For path discovery, the challenge lies in the graph's unique properties where path values are determined by the composition of non-linear, non-additive concave functions. This structure violates the optimal substructure assumption required by traditional algorithms like Dijkstra's. To address this problem, we design a Breadth-First Search Pruning algorithm to discover optimal paths while reducing computational complexity. It incorporates a pruning technique to significantly narrow the search space during the exploration process, achieving a time complexity of $O(|V| \cdot |E|)$, where $|V|$ and $|E|$ are the number of vertices (tokens) and edges. This approach avoids the computational burden of exhaustive searches while identifying superior paths missed by simple heuristics.

For allocation optimization, while the general routing problem is convex, global solutions based on dual methods\cite{diamandis2023efficient} face two critical limitations in practice. First, their convergence is merely sublinear, making them too slow for time-sensitive DEX markets. Second, they can be numerically unstable when token prices differ greatly (e.g., differences exceeding $10^{11}$ between BTC and meme coins like SHIB~\cite{coingecko}), making them not robust~\cite{github_issue_29}. \eat{We solve this problem by focusing on a path-constrained allocation sub-problem. we prove that the sub-problem is strongly concave and the problem can be efficiently solved with a guaranteed linear convergence rate.} To overcome these limitations, we introduce the Adaptive Sign Gradient Method (ASGM) for the allocation sub-problem. We prove this sub-problem is strongly concave, which allows ASGM to achieve a guaranteed linear convergence rate, ensuring both computational speed and numerical robustness against extreme price differences.

To validate PRIME's effectiveness and efficiency, we conduct extensive experiments using real-world Ethereum data, comparing it against the widely-used Uniswap Smart Order Router (SOR). Our results demonstrate PRIME’s superiority in both execution price and computational efficiency, achieving up to 28 basis points (bps) better prices on large trades while reducing computation time by 35\% to 96\% across all test cases. These performance gains remained stable across diverse market conditions. Further, we also benchmark PRIME against a convex optimization method, which provides the theoretical optimum. The results confirm that PRIME produces near-optimal outputs, with only a negligible difference from the theoretical best. The practicality of PRIME is further validated by its successful deployment in hedge funds. In summary, our main contributions are as follows.

\begin{enumerate}[topsep=2mm, partopsep=0pt, itemsep=1mm, leftmargin=18pt]
    \item We formalize the Token Graph Routing Problem (TGRP) as a novel graph query optimization problem where path values are the composition of non-linear concave functions. This formulation accurately aligns with real-world token swap transactions in the blockchain.  
    
    \item We introduce PRIME, a novel iterative framework featuring an efficient path discovery algorithm (a BFS-based pruning method with $O(|V| \cdot |E|)$ complexity) and a robust optimization algorithm with a linear convergence rate.
    
    \item We carry out extensive experiments on three market periods to evaluate the effectiveness of PRIME in real-world DEXs. Experimental results confirm that our algorithm can improve execution price while reducing computational time compared to classical graph algorithm and current industry algorithms. Besides, we benchmark against a convex optimization method, revealing that PRIME can efficiently produce near-optimal output.

\end{enumerate}
}

\eat{For allocation optimization, we invent the Adaptive Sign Gradient Method (ASGM), a robust method to calculate the optimal allocations of input. The primary challenge in token swap optimization arises from extreme value disparities between tokens (e.g., differences exceeding $10^{11}$ between BTC and meme coins like SHIB~\cite{coingecko}). This leads to vanishingly small gradient values for some token pair. Consequently, traditional optimization methods, such as the Projected Gradient Method (PGM)~\cite{calamai1987projected}, are unsuitable for this problem because they rely on precise gradient values and it can lead to numerical instability when solving the problem. To address the challenge, ASGM innovatively uses only the sign of the gradient for determining the reallocation direction and adaptively adjusts the step size using the Armijo condition~\cite{armijo1966minimization}. This makes ASGM robust to extreme value differences. Furthermore, we also provide a theoretical proof demonstrating that ASGM converges to an optimal allocation with a guaranteed sub-linear convergence rate of at least $O(1/T) $, where $ T $ is the number of iterations. This means that the optimality gap decreases inversely with the number of iterations, ensuring fast, stable convergence.

To validate PRIME's effectiveness and efficiency, we conduct extensive experiments using real-world Ethereum data, comparing it against the widely-used Uniswap Smart Order Router (SOR)~\cite{uniswap_routing}. Our results demonstrate PRIME’s superiority in both execution price and computational efficiency, achieving up to 7 basis points (bps) better prices on small trades and striking improvements of over 300 bps on large trades for assets like WBTC and LINK. In terms of computational efficiency, PRIME reduces computation time by 24\% to 96\% across all test cases. To demonstrate the stability of these performance gains, we tested PRIME across diverse market dynamics, using extensive data from bearish, stable, and bullish periods. Our results confirm that PRIME's superiority in both price and speed is consistently maintained regardless of market volatility. To further asses the quality of our solution, we also benchmark PRIME against a convex optimization method, which provides the theoretical optimum. The results confirm that PRIME produces near-optimal outputs, with only a negligible difference from the theoretical best. The practicality of our approach is further reinforced by demonstrating our model's consistency with major DeFi protocols and its successful deployment in hedge funds. In summary, our main contributions are as follows.

\begin{enumerate}[topsep=2mm, partopsep=0pt, itemsep=1mm, leftmargin=18pt]
    \item We formalize the Token Graph Routing Problem (TGRP) as a novel graph query optimization problem where path values are the composition of non-linear concave functions, a structure that invalidates classical shortes path algorithms. This formulation accurately aligns with real-world token swap transactions in the blockchain.  
    
    \item We introduce PRIME, a novel iterative algorithm including an efficient path discovery method and a robust allocation optimization method via the Adaptive Sign Gradient Method (ASGM). The optimization algorithm is designed to solve the numerical instability caused by extreme value disparities in token markets, and we provide a rigorous proof of its convergence property.
    
    \item We carry out extensive experiments on three market periods to evaluate the effectiveness of PRIME in real-world DEXs. Experimental results confirm that our algorithm can improve execution price while reducing computational time compared to current industry algorithms. Besides, we benchmark against a convex optimization method, revealing that PRIME can efficiently produce near-optimal output.
\end{enumerate}

}

\subsection{Organization}
The remainder of this paper is organized as follows. Section~\ref{sec:background} provides background on the decentralized financial market. Section ~\ref{sec:formulation} formalizes the Token Graph Routing Problem (TGRP). Section~\ref{sec:existing} surveys existing solutions and analyzes their limitations. Section ~\ref{sec:algorithm} details the proposed PRIME algorithm. Section ~\ref{sec:experiment} presents an extensive empirical evaluation using real-world Ethereum data. Section ~\ref{sec:Discussion} discusses practical industry implementation issues. Section~\ref{sec:relate} reviews related work, and Section~\ref{sec:conclude} concludes the paper.
\section{Decentralized Financial Market}
\label{sec:background}

\spara{Blockchain and Decentralized Finance (DeFi)}
Blockchain technology is a distributed ledger that enables secure, transparent, and tamper-resistant transaction recording across a distributed network without centralized intermediaries~\cite{nakamoto2008bitcoin}. Platforms like Ethereum expanded this infrastructure through smart contracts~\cite{wood2014ethereum}. Smart contracts are self-executing pieces of code deployed on the blockchain that automatically enforce the terms of an agreement when predefined conditions are met~\cite{szabo1997formalizing}. These technologies underpin Decentralized Finance (DeFi), a paradigm shift in financial systems. The new paradigm enables users to trade, lend, and stake assets peer-to-peer without reliance on central authorities. DeFi supports diverse applications, including decentralized exchanges~\cite{schar2021decentralized}, lending platforms~\cite{gudgeon2020defi}, and yield farming strategies~\cite{werner2022sok}.

\spara{Decentralized Exchange (DEX)}
Decentralized exchanges (DEXs) are core pillars of the DeFi ecosystem, enabling peer-to-peer trading without centralized custody~\cite{schar2021decentralized}. Unlike traditional exchanges that rely on order books, most DEXs utilize Liquidity Pool smart contracts. By maintaining dual-sided asset reserves, these protocols allow users to trade directly against the contract, where the execution price is dynamically adjusted according to the resulting asset imbalance in the pool. To facilitate this process, liquidity providers supply these tokens to the pools in exchange for a share of transaction fees. Currently, these pools operate on a wide array of digital assets, such as Wrapped Ether (WETH), an ERC-20 compliant version of the Ethereum native token; Wrapped Bitcoin (WBTC), a tokenized representation of Bitcoin on Ethereum; and stablecoins like USDC and USDT which are pegged to the U.S. dollar. This automated, on-chain mechanism forms the foundation for decentralized trading, offering a more transparent and efficient financial system.

\eat{
\spara{Blockchain and Smart Contracts} Blockchain technology \rFour{is} a distributed ledger that enables secure, transparent, and tamper-resistant transaction recording across a distributed network without centralized intermediaries~\cite{nakamoto2008bitcoin}. The evolution of blockchain platforms, notably Ethereum, extended its capabilities by introducing smart contracts~\cite{wood2014ethereum}. Smart contracts are self-executing pieces of code deployed on the blockchain that automatically enforce the terms of an agreement when predefined conditions are met~\cite{szabo1997formalizing}. Consequently, smart contracts have become a foundational technology for the expansion of decentralized ecosystems, fostering innovation and new business models like Decentralized Finance.

\spara{Decentralized Finance}
Decentralized Finance (DeFi) represents a paradigm shift in financial systems, leveraging blockchain and smart contracts to offer alternatives to traditional intermediaries like banks and brokers. Ethereum plays a pivotal role in DeFi's emergence. It enables users to trade, lend, borrow, and stake assets directly on-chain\cite{aquilina2024decentralized}. The core innovation of the paradigm shift lies in the use of smart contracts that automate and enforce agreements, enabling peer-to-peer interactions without reliance on central authorities. This enables a range of \rFour{applications} including decentralized exchanges\cite{schar2021decentralized}, lending platforms\cite{gudgeon2020defi}, and yield farming strategies~\cite{werner2022sok}. These protocols operate on a wide array of digital assets, such as Wrapped Ether (WETH), an ERC-20 compliant version of Ethereum native token; Wrapped Bitcoin (WBTC), a tokenized representation of Bitcoin on Ethereum; and stablecoins like USDC and USDT which are pegged to the U.S. dollar. Ultimately, DeFi offers the potential for a more transparent and efficient financial system.

\spara{Decentralized Exchange (DEX)}
Decentralized exchanges (DEXs) are core pillars of the DeFi ecosystem, enabling peer-to-peer cryptocurrency trading without centralized custody. Unlike traditional exchanges that order book to match buyers and sellers, most DEX are built on \textbf{Liquidity Pools} smart contracts, which hold reserves of two or more tokens, enabling the direct swaps between these tokens. Users known as liquidity providers supply tokens to these pools in exchange for a share of the transaction fees generated from trades. This automated, on-chain liquidity mechanism forms the foundation for decentralized trading.

}

\spara{Constant Function Market Maker (CFMM)}
At the core of each liquidity pool lies the Constant Function Market Maker (CFMM), an algorithm that deterministically sets asset prices. A CFMM is an algorithm that determines the price of assets based on a predefined mathematical formula in the pool. For example, Uniswap V2~\cite{uniswapv2core} employs the constant product model. Specifically, it maintains the invariant $x \cdot y= K$, where $x$ and $y$ denote the reserve quantities of Token A and Token B within the liquidity pool, respectively. And $K$ is a constant. If a user inputs an amount $\Delta x$ of token A to swap for Token B, the pool calculates the output amount  $\Delta y$ of Token B returned to the user such that the invariant is preserved  post-trade: $(x+\Delta x) \cdot (y-\Delta y)= K$. Solving for the output amount yields $\Delta y = y - \frac{K}{x+\Delta x}$. Following this model, subsequent protocols introduced more sophisticated functions to enhance capital efficiency. For instance, Uniswap V3~\cite{adams2021uniswap} implements concentrated liquidity that enables liquidity providers to specify price ranges and optimize capital efficiency, while protocols like Curve~\cite{curve_egorov2019stableswap} and Balancer~\cite{balancer} offer specialized functions for stablecoins and multi-asset pools, respectively.

\eat{
Liquidity Pool Contracts are a class of smart contracts that hold reserves of two or more tokens and facilitate trading between them~\cite{uniswapv2core,adams2021uniswap,balancer,curve_egorov2019stableswap}. \eat{Liquidity providers deposit token pairs into these pools, and in return, they earn transaction fees generated from swaps. } The price of tokens in a liquidity pool is algorithmically determined by the pool’s reserves and a predefined mathematical function. For example, Uniswap V2~\cite{uniswapv2core} adopts the equation $x*y= K$, where $x$ and $y$ are the number of token A and token B locked in the liquidity pool and $K$ is constants. If a user wants to use $\Delta x$ amount of token A to swap token B, the output amount can be calculated as $\Delta y = y - \frac{K}{x+\Delta x}$. Later iterations of DEX protocols introduced more sophisticated constant function designs to enhance asset efficiency. For instance, Uniswap V3 ~\cite{adams2021uniswap} implemented concentrated liquidity, enabling liquidity providers to specify price ranges and optimize capital efficiency. Curve employs a hybrid constant function for token swaps between similarly priced assets~\cite{curve_egorov2019stableswap}. Balancer introduced the concept of customizable token weights within a liquidity pool for multiple assets~\cite{balancer}. This flexibility allows DEX to design more suitable pool for different asset pairs. Currently, Uniswap remains the dominant DEX on Ethereum by TVL, while Curve dominates stable-coin trading volume.}

\spara{Example of a Token Graph Route}
To maximize the amount of tokens a user receives, a token graph route often involves splitting the input amount across multiple trading paths. We illustrate this with an example in Figure \ref{fig:Token Swap Transaction}, where a user aims to swap a total amount of WETH for the maximum possible amount of USDC. The figure models the DeFi market as a token graph: nodes represent digital assets (e.g., WETH, USDC, etc.), and edges represent liquidity pools enabling swaps between them. The selected route, highlighted by solid lines, demonstrates a split-routing strategy. Specifically, the total input is divided across three distinct routes: 30\% is routed through the direct path WETH → USDC, 50\% is channeled through the multi-hop path WETH → USDT → USDC, and the remaining 20\% traverses a second multi-hop path via WETH → WBTC → USDC. The final amount of USDC received by the user is the sum of the outputs from these three parallel paths. This example demonstrates that maximizing trade output requires both discovering the best combination of paths and determining the optimal allocation of funds.

\begin{figure}[!t]
    \centering
    \includegraphics[width=0.98\linewidth]{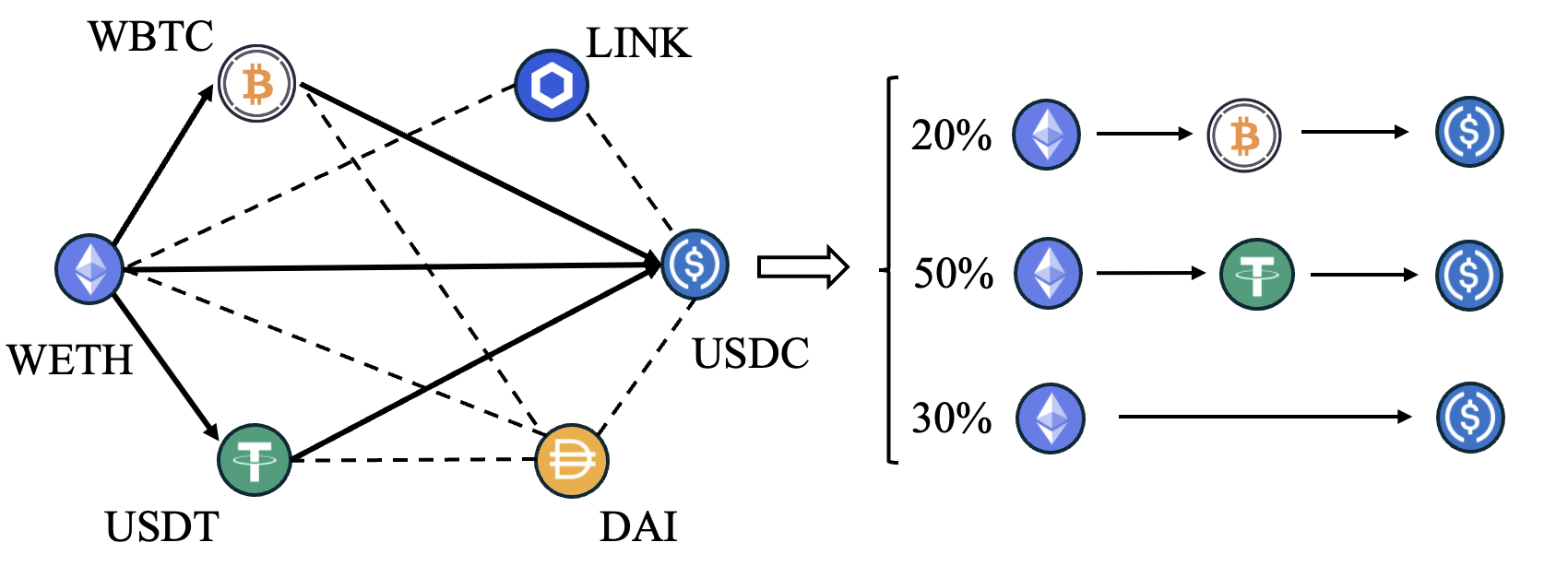}
    \caption{An Example Token Route from WETH to USDC}
    \label{fig:Token Swap Transaction}
\end{figure}

\section{Problem Formulation}
\label{sec:formulation}
\eat{Prior work has modeled token swaps on blockchain, but their algorithms often do not address real-world constraints. Therefore, we redefine the \textbf{Token Swap Routing Problem} to align with actual execution mechanisms.}

\eat{In this section, we define the that is consistent with the practical token swap transactions in DEX. First, we define key concepts essential to our models and then formally formulate the problem. For clarity, we summarize the notations frequently used in our model in Table~.} 
In this section, we formalize the \textbf{Token Graph Routing Problem (TGRP)}, ensuring alignment with real-world token swap mechanics in DEXs. We first introduce the essential concepts and system models, followed by the mathematical formulation of the problem. For clarity, Table \ref{tab: notations} summarizes the frequently used notations throughout the paper.

\eat{Prior work has modeled token swaps on blockchain, but their models do not conform to reality. Therefore, we redefine the \textbf{Token Swap Routing Problem} to align with actual execution mechanisms. In the preceding section, we introduced the practical execution mechanism of token swaps on blockchain and elucidated the previous related theoretical research. While these research conducted a great model to formalize the problem, the algorithm they developed failed to address the problem under real-world constraints. Therefore, we redefine the \textbf{Token Swap Routing Problem} to be consistent with the actual token swap execution mechanism. \hym{the above is a bit wordy} }\eat{In the practical mechanism, it will restrict token swaps via paths directed from the source token to the target token\hym{i still feel the logic a bit weird but still ok for me.}. Specifically, this restriction ensures that swaps follow a sequential, directed sequence of tokens. For instance, to swap from Token A to Token C, the swap must proceed through a defined path, potentially involving an intermediate token like Token B (i.e., Token A -> Token B -> Token C). \hym{unknown why need this example, what you want to say, seams no need.}}



To begin with, a swap function describes the mathematical relationship between the input and output amount of asset within a decentralized exchange (DEX) liquidity pool. The formal definition is as follows: 

\begin{definition}[Swap Function]
\label{def: swap function}
A swap function \( f: \mathbb{R}_{\geq 0} \rightarrow \mathbb{R}_{\geq 0} \) is a strictly concave, monotonically increasing function satisfying \( f(0) = 0 \), mapping an input token amount \( x \) to an output token amount \( f(x) \).
\end{definition}

\eat{This swap function has the following properties. Firstly, the zero origin property, $f(0)=0$, indicates that if the input token amount is zero and the resulting output token amount will be zero. Secondly, the function is monotonically increasing, ensuring that a larger input token amount consistently results in a larger output token amount. Third, the property of concavity reflects the diminishing returns in token swaps. This implies that as the input amount $x$ increases, the additional output $f(x)$ gained from incremental unit of input decreases. In practical terms, this property implies that larger swaps typically receive a less favorable exchange rate than smaller ones. }

The swap function is characterized by three fundamental properties: 1) zero origin ($f(0)=0$), 2) monotonicity (strictly increasing),and 3) strict concavity. While the first two are obvious, strict concavity captures the critical economic principle of diminishing marginal returns. Mathematically, as the input amount $x$ increases, the marginal output price —represented by the derivative $f'(x)$, continuously declines. In financial terms, this implies that the marginal exchange rate worsens as the trade size grows. Consequently, the effective exchange rate (the average rate for the entire swap, $f(x)/x$) becomes less favorable for larger swaps compared to smaller ones. This phenomenon is termed price slippage~\cite{angeris2021analysis}. As an example, consider a constant product market maker (CPMM) pool (e.g., Uniswap V2) with reserves $R_{in}$ (input token) and $R_{out}$ (output token). The swap function $f(x)$, which maps an input $x$ to an output is $f(x) = \frac{R_{out} \cdot x}{R_{in} + x}$. The second-order derivative is 
$$
f''(x) = \frac{-2 R_{in}R_{out}}{(R_{in}+x)^3} < 0,
$$
which confirms that $f(x)$ is strictly concave. 
This property also holds for the end-to-end swap functions of concentrated liquidity AMMs like Uniswap V3~\cite{adams2021uniswap}. Moreover, in realistic scenario, the input amount $x$ is restricted to a finite range. On this bounded domain, the function's strict concavity implies that it is also strongly concave~\cite{boyd2004convex}.

Based on the swap function between token inputs and outputs, we can now introduce the token swap graph, a structure that provides the framework for modeling sequential token swaps.

\begin{definition} [Swap Graph]
\label{def: token swap graph}
A token swap graph is a directed graph \( G = (V, E, F) \), where: 
\begin{itemize}
    \item $V$ is a set of vertices, each of which represents a type of token. 
    \item $E \subseteq V \times V $ is a set of directed edges. For an edge, $e = 
    (v_i,v_j)\in E$, the edge represents a swap hop from token $v_i$ to token $v_j$.
    \item $F = \{ f_e | e \in E \}$ is the set of swap functions associated with the edges, where for each edge $e = (v_i, v_j)$, $f_e$  is the swap function. 
\end{itemize}
\end{definition}


In our model, the decentralized exchange ecosystem is represented as a directed graph. Each liquidity pool corresponds to two or several directed edges in this graph. Specifically, for a common case of a liquidity pool involving two tokens, $v_i$ and $v_j$, this pool generates two directed edges: one from $v_i$ to $v_j$ and vice versa. \eat{This bidirectionality is crucial since the swap function (and the corresponding exchange rate) for swapping from $v_i$ to $v_j$ may differ from that for swapping from $v_j$ to $v_i$ due to varying liquidity.} In particular, some liquidity pools may involve multiple tokens. To integrate into our model, these pools will introduce  distinct directed edges for every possible swap between any two tokens within the pool. As an example, a Curve 3pool~\cite{curve_3pool} contains  $3$ USD-related tokens and allows swaps between any two of them. This results in $3\times2=6$ directed edges in our model. Having formalized the representation of liquidity pools and the resulting token swap graph, we are now ready to define another crucial concept in our model: the Swap Path. 

\begin{table}[!t]
\centering
\caption{Frequently Used Notations}
\label{tab: notations}
\begin{tabularx}{0.48\textwidth}{lX}  
\toprule
\textbf{Notation} & \textbf{Description} \\
\midrule
$f,f_e$ & Swap function for DEX liquidity pool. \\
\( G=(V, E, F) \) & Token swap graph: \( V \) (set of tokens), \( E \) (set of directed swap hops), \( F \) (set of edge swap functions). \\
\( e = (v_i, v_j) \in E \) & A directed edge in \( G \), representing a swap hop from token \( v_i \) to token \( v_j \), associated with swap function \( f_e \). \\
\( s, t \in V \) & Source token \( s \) and target token \( t \) in the token swap graph. \\
\( x \) & Input amount of token \( s \) to swap. \\
\( \mathcal{P} \) & Set of chosen paths from \( s \) to \( t \) in \( G \). \\
\( W_p \) & Weight assigned to path \( p \in \mathcal{P} \), representing the fraction of input \( x \) allocated to \( p \). \\
\( f_P(x) \) & Output amount from path \( p \in \mathcal{P} \) for input \( x \). \\
\( w_e \) & Edge weight for edge \( e \in E_i \) in a hop of a path, with \( \sum_{e \in E_i} w_e = 1 \), determining the distribution of input across parallel edges. \\
\bottomrule
\end{tabularx}
\end{table}

\eat{
In this graph, each edge is associated to a liquidity pool. For most pools involving two tokens $(v_i,v_j)$ (e.g. Uniswap V2 and Uniswap V3), it generate two edges: a directed edge from $v_i$ to $v_j$ and a directed edge from $v_j$ to $v_i$. Since  the swap function from different direction in the same pool may have a different swap function due to the different liquidity. \hym{meaning that when user swap token i to token j, or verse versa.}For some liquidity pool involving multiple tokens (e.g., a Curve pool), the type of pool corresponds to multiple directed edges in the token swap graph. \hym{wait, any pool is one direction? seems all pools are two side} Specifically, for a pool containing $n$ types of tokens, say $ v_1, v_2, \ldots, v_n $, \hym{pool? whats the definition of pool, i mean one pool should only consider a pair of vi,vj right? why multiple types of tokens that have n tokens. }this pool will generate a directed edge $ (v_i, v_j) $ for every ordered pair of tokens $ (v_i, v_j) $ where $ i \neq j $. With the swap function and token swap graph now formalized, we proceed to define the core concept, \textbf{path}, in our model. It describes how tokens are swapped from a source token to a target token. We distinguish between two types of paths in the token swap graph: simple paths and complex paths.}

\begin{definition}[Swap Path]
\label{def: path}
A path in $ G $ from a source token $ v_0 $ to a target token $v_k$ is an ordered sequence of hops. A hop represents the token swap between two tokens, \rFour{consists} of one or more parallel edges in the graph. The number of hops $k$ is the length of path. We define two types of paths:
\begin{itemize}
    \item \rTwo{Single-Edge Path}: A single-edge path $ P(v_0, v_k) = \{e_1, e_2, \dots, e_k\} $ is an ordered sequence of edges, where each hop consists of a single edge $ e_i = (v_{i-1}, v_i) $.
    \item \rTwo{Multi-Edge Path}: A multi-edge path $ P_c(v_0, v_k) = \{E_1, E_2, \dots, E_k\} $ is an ordered sequence of parallel edges, where parallel edges $ E_i = \{e_{i,1}, e_{i,2}, \dots, e_{i,m_i}\} $ consists of one or more edges from $ v_{i-1} $ to $ v_i $.
\end{itemize}
\end{definition}
The swap path details the exchange of tokens from source to target. In our model, we define single-edge and multi-edge paths based on if there \rFour{exists} multiple edges in a hop of the path. As multi-edge paths offer richer opportunities for optimization, they are the primary focus of this paper. Therefore, in this paper, 'path' will refer to a multi-edge path unless otherwise noted. Given such a path $ P = \{E_1, E_2, \dots, E_k\} $ from a source token $ v_0 $ to a target token $ v_k $, the swap function of the path, denoted by $f_{P}(x) $, is the composition of the swap functions of each individual hop. Path swap function can be calculated as follows:

\begin{equation}
f_{P}(x) = f_{E_k}(f_{E_{k-1}}(\cdots f_{E_1}(x) \cdots)),
\end{equation}

and for each hop $E_i$, the swap function $f_{E_i}(x)$ depends on how we distribute the input token amount across the parallel edges within $E_i$. Therefore, we introduce edge weights to formalize the distribution. For each edge $e \in E_i$, we assign an edge weight $w_{e} \geq 0$ such that $\sum_{e\in {E_i}} w_{e} = 1$.  Then the swap function of the hop $E_i$ is given by:

\begin{equation}
\label{equ: edge weights}
f_{E_i}(x) = \sum_{e\in E_i}f_{e}(w_{e}\cdot x) \text{, where}\sum_{e\in {E_i}} w_{e} = 1,
\end{equation}

where $ x $ represents the amount of token entering hop $E_i$, and $f_{e}(x)$ is the swap function of edge $e$. Specifically, if the $i$-th hop has only one edge, its swap function is that of the edge itself. Using the definitions of swap functions and swap path, the function $f_P$ can describe the final token output. It's important to note that swap functions are strictly concave and monotonically increasing. This concavity is maintained throughout path composition and weighted hop sums, as both operations preserve concavity~\cite{boyd2004convex}. With these foundational concepts, we are now ready to formally define the token swap routing problem.

\eat{In the Token Swap Routing Problem, we aim to optimally route a given input quantity in paths from a source token to a target token through.  For each path, we also consider how to distribute the token amount across parallel edges within each hop of the path, if such parallel edges exist. To formalize this, we introduce two types of weights:
\begin{itemize}
    \item \textbf{Path Weight} (\(W_p\)): For each path \(p \in \mathcal{P}\) from the source token \(s\) to the target token \(t\), we assign a path weight \(W_p \geq 0\). This weight represents the fraction of the total input quantity \(x\) that is allocated to path \(p\). The sum of path weights across all chosen paths must be equal to 1, ensuring that the entire input quantity is utilized.
    \item \textbf{Edge Weight} (\(w_{p,e}\)): For each path \(p \in \mathcal{P}\) and each edge \(e\) within a hop of path \(p\), we assign an edge weight \(w_{p,e} \geq 0\). If a hop \(E_i\) in path \(p\) consists of multiple parallel edges, these weights determine how the token amount entering this hop is distributed among these parallel edges. For each hop \(E_i\) in path \(p\), the sum of edge weights for all edges \(e \in E_i\) must be equal to 1, ensuring that the entire token amount entering the hop is distributed.
\end{itemize}

With introduced weights, we now formally define the Token Swap Routing Problem as an optimization task over a directed graph. }

\begin{definition}[Token Graph Routing Problem] 
\label{def:tsrp}
Given a directed graph \( G = (V, E, F) \), a source token \( s \in V \), a target token \( t \in V \), an input quantity \( x > 0 \). We need to find a set of paths $\mathcal{P}$ from source \(s\) to target \(t\) in the graph. Each path \( p \in \mathcal{P} \) is assigned a weight \( W_p \geq 0 \), representing the fraction of the total input \( x \) allocated to that path, with the objective of maximizing the output. Within each path, the input at every hop is distributed across parallel edges according to edge weights \( w_e \geq 0 \), as specified in Equation \ref{equ: edge weights}. The optimization problem is formulated as:  
\begin{equation}  
\begin{aligned}  
\max_{\mathcal{P},W_p, w_e} &J(W)= \quad \sum_{p \in \mathcal{P}} f_p(W_p \cdot x), \\  
\text{s.t.} & \quad \sum_{p \in \mathcal{P}} W_p = 1, W_p \geq 0, \\  
& \quad \sum_{e \in E_i} w_e = 1, w_e \geq 0, \\
&\quad \forall E_i \in p, \forall p \in \mathcal{P}. \\    
\end{aligned}  
\end{equation}  

\end{definition}

Furthermore, an important constraint in this problem is that the chosen paths must be pool-disjoint; that is, no two paths in the set $\mathcal{P}$ can share edges that are associated with the same underlying pool. To explain, if two paths were to share the same pool, a token swap on one path would alter the state (e.g., the pool balances and thereby its swap function), which would affect the outcome of any subsequent swap involving that pool. Therefore, the pool-disjoint constraint ensures that each path's swap function remains independent. Notably, this assumption is consistent with the model proposed by~\cite{danos2021global}.






\eat{
    \item \textbf{Path Selection:} Find a set of paths \( P = \{P_1, P_2, \dots, P_k\} \) (where \( k \leq n \)) from the source token \( s \) to the target token \( t \) within the graph \(G\), such that each path \( P_i \) has a hop count of at most \( m \).
    \item \textbf{Optimal Input Allocation } Allocating the input \( x \) across the selected paths \( P_i \) in order to maximize the total output. Specifically, the equation is presented as follows:
}



\eat{An implicit but important constraint in this problem is that the edges involved in these paths must be pool-disjoint. As introduced in Definition \ref{def: token swap graph}, a pool may correspond to multiple different edges. If different paths share edges that are associated with the same pool, executing a swap along one path would alter the pool’s state, thereby changing the output of the swap function for any subsequent path using that pool. The pool-disjoint constraint ensures that each path’s swap function remains independent. The assumption is aligned with the model proposed in~\cite{danos2021global}. This provides a lower bound for the Token Swap Routing Problem.}





\rThree{

\section{Existing Solutions Revisited}
\label{sec:existing}
\eat{
In this section, we will review the existing solutions for solving the Token Graph Routing Problem(TGRP). Before discussing the limitations of specific algorithms, we review the main challenges in solving the problem. First, the scale of the graph is massive. As detailed in Section \ref{sec:intro}, the graph contains at least 400,000 tokens and 380,000 liquidity pools. Second, the problem suffers from extreme numerical heterogeneity. The exchange rate between assets can differ by orders of magnitude between different tokens (e.g., BTC at $10^{5}$ USD vs. SHIB at $10^{-6}$ USD).}

As highlighted in Section \ref{sec:intro}, solving the Token Graph Routing Problem (TGRP) requires overcoming four-fold challenges: non-linear edge weights which invalidate optimal substructure, massive graph scale, extreme numerical heterogeneity, and latency constraints. \rFour{In this section, we review the landscape of existing solutions against these conflicting constraints. While various approaches exist—ranging from classical algorithms to general-purpose convex solvers—they all succumb to specific bottlenecks, failing to simultaneously address the full spectrum of challenges. Consequently, none serve as a viable solution for production environments. We categorize these methods into three primary families and analyze why they fall short of solving the TGRP.}

\spara{Classical Graph Algorithms} 
Fundamentally, the TGRP constitutes a specific instance of the Generalized Network Flow (GNF) problem. The theoretical standard for solving such problems lies in classical Primal-Dual methods.(e.g., the Successive Shortest Path algorithm~\cite{truemper1973optimal}) These algorithms operate on a rigorous economic logic: they maintain node potentials (representing the current market valuation of each token) and iteratively search for paths where the actual exchange rate offered by the network exceeds this valuation. While this provides a sound strategy for discovering profitable routes, these algorithms are fatal in the context of TGRP. A classical GNF algorithm, blind to this convexity, would erroneously push flow into a path until hard capacity limits are reached, failing to stop when the marginal price becomes unfavorable. Other classical graph algorithms, such as shortest-path algorithms (e.g., Yen's algorithm~\cite{yen1970algorithm}), are also ill-suited for TGRP as they fail to capture the non-linear nature of the edge weights.

\spara{Global Convex Optimization}
Previous studies proposed to formulate the TGRP as a global convex optimization problem~\cite{angeris2022optimal,angeris2022constant,diamandis2023efficient,diamandis2024convex}. While mathematically rigorous, these numerical methods encounter two fatal bottlenecks when applied to real-world DEXs. First, they struggle with model expressivity. Methods based on general solvers~\cite{angeris2022optimal,angeris2022constant} are constrained by canonical forms and cannot efficiently encode the complex formulas like concentrated liquidity mechanisms~\cite{adams2021uniswap}. Subsequent approaches~\cite{diamandis2023efficient, diamandis2024convex} using dual decomposition result in non-smooth objective functions, leading to convergence difficulties on complex topologies~\cite{github_issue_29}. \rTwo{Second, and more critically, they fail to ensure strict flow conservation. Numerical methods (e.g., interior-point methods) operate within an error tolerance, accepting solutions that are close enough rather than exact. In markets with extreme asset heterogeneity, this lack of strict equality leaves residual tokens (`dust') stranded at intermediate hops. This phenomenon is validated experimentally in Section \ref{subsec:convex}. Cleaning up these residuals would require handling routing problems for each residual token. In summary, while convex optimization offers a theoretical solution, it is too slow to satisfy sub-second latency constraints and too numerically unstable to ensure atomic on-chain execution, thus failing to address two of the four critical challenges of TGRP.}

\spara{Industrial Heuristics} 
To address these challenges in practice, the industry standard, such as the Uniswap Smart Order Router (SOR)~\cite{uniswap_routing}, 
employs a heuristic-based approach. The algorithm first uses heuristic rules to select liquidity pools with high total value locked (TVL). It then performs path searching on this subgraph and use combinatorial optimization to find the optimal allocation. However, this method has two major drawbacks. First, the heuristic filtering is aggressive and often excludes pools with lower liquidity but higher price. Second, the combinatorial optimization phase relies on splitting the input amount into discrete parts to test combinations. This brute-force nature makes the algorithm computationally expensive, yet it still fails to guarantee a mathematically optimal solution.

}

\begin{figure*}[htbp]
    \centering
    \includegraphics[width=0.9\linewidth]{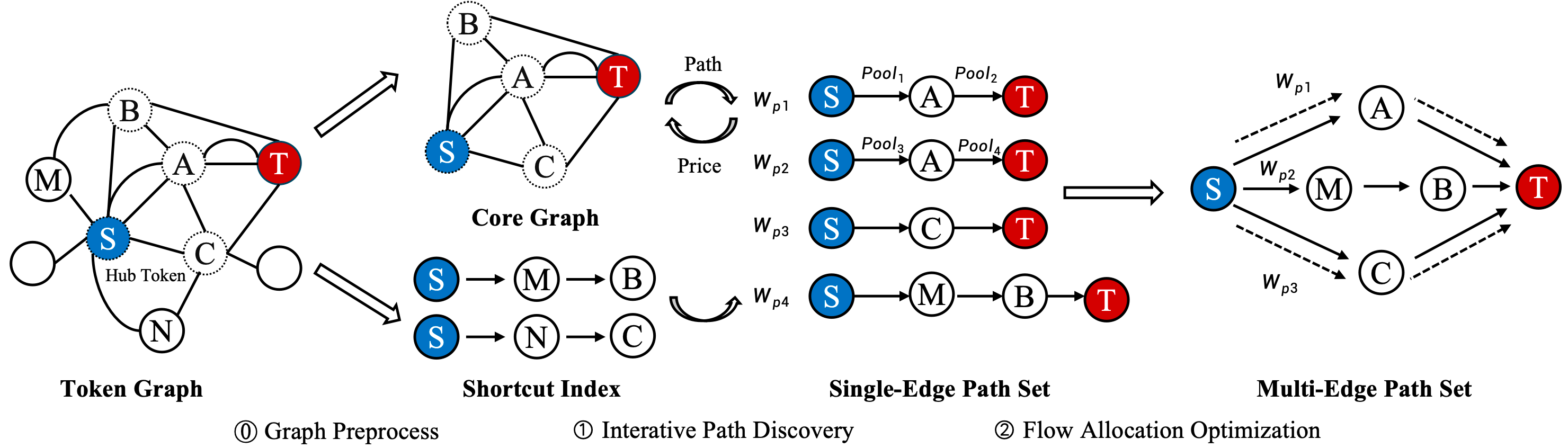}
    \caption{\rTwo{The Workflow of PRIME Algorithm}}
    \label{fig:prime_algo}
\end{figure*}

\eat{
\subsection{Limitations of Prior Algorithms} 
Existing academic approaches to the TGRP can be broadly classified into two paradigms: theoretical optimization models and practical heuristic methods. The first category, represented by convex optimization-based methods~\cite{angeris2022constant,angeris2022optimal, diamandis2023efficient,diamandis2024convex}, offers a \rFour{theoretical} framework. However, their practical application is hampered by significant numerical instability. As demonstrated in our analysis in Section \ref{subsec:convex}, this instability arises from a fundamental precision mismatch between the floating-point arithmetic of mathematical solvers and the fixed-point arithmetic of on-chain smart contracts. This discrepancy often renders their theoretically optimal solutions practically infeasible, necessitating complex and often suboptimal post-processing to be executable. The second paradigm includes practical, heuristic-based \rFour{algorithms} like Uniswap's Smart Order Router (SOR)\cite{uniswap_routing}. These \rFour{algorithms} simplify the complex continuous allocation problem by discretizing it—for instance, by splitting trades into fixed increments. While computationally efficient, this heuristic fundamentally constrains the solution space. By design, it prevents the discovery of a true global optimum, leading to the performance and efficiency gaps we identified in our evaluation. Finally, while commercial aggregation services provide sophisticated routing in practice~\cite{0x_swap,1inch_swap}, their algorithms are proprietary and closed-source. This 'black box' nature prevents a direct and reproducible academic comparison, and they were therefore excluded as baselines in our study.

Diamandis et al.~\cite{diamandis2023efficient} proposed an efficient algorithm for the optimal routing problem across CFMMs. Their approach formulates the problem as maximizing a general user utility function, $U(\Psi)$, where $\Psi \in \mathbb{R}^n$ is the trade vector representing the net change in the user's holdings across all $n$ assets in the system after executing trades across all available CFMMs. Their algorithm leverages dual decomposition to decouple the problem into parallelizable subproblems, each solving for optimal arbitrage trades within individual CFMMs.

The algorithm introduces dual variables (interpreted as asset prices), $\nu \in \mathbb{R}^n$, to relax the coupling constraint $\Psi = \sum_{i=1}^m A_i \Delta_i$, where $\Delta_i$ is the trade vector in market $i$'s local assets and $A_i$ maps it to the global asset space. This relaxation decomposes the original complex problem into simpler, independent subproblems:

\begin{enumerate}
\item An optimization probelm over the network trade vector: $\max_{\Psi} (U(\Psi) - \nu^T \Psi)$. This finds the best overall network outcome $\Psi$ given the prices $\nu$.
\item An independent optimal arbitrage problem for each market $i$: $\max_{\Delta_i \in T_i} (A_i^T \nu)^T \Delta_i$. This finds the most profitable trade $\Delta_i$ within market $i$, given the local prices $A_i^T \nu$ derived from the global prices $\nu$. This can be solved independently for each CFMMs.
\end{enumerate}

The overall algorithm then focuses on finding the \textit{clearing} prices $\nu^*$ (by solving the dual problem) such that the solutions $\Psi^*$ and $\{\Delta_i^*\}$ obtained from these subproblems also satisfy the original coupling constraint $\Psi^* = \sum A_i \Delta_i^*$, thereby solving the original routing problem.

While theoretically elegant, this formulation contains critical oversights when applied to practical token swaps. For example, in a actual token swap, the net change for all intermediate tokens must be exactly zero. That is, if token C is used in the path A $\rightarrow$ C $\rightarrow$ B, the final network trade vector $\Psi$ must satisfy $\Psi_C = 0$. However, the dual decomposition method does not inherently enforce these zero-balance constraints on intermediate tokens. The optimal $\Psi^*$ resulting from solving the dual problem (and the corresponding $\Delta_i^*$ from the subproblems) is optimal with respect to the overall utility function $U(\Psi)$ and the relaxed coupling constraint, but it does not guarantee that specific $\Psi_j^*$ for intermediate $j$ will be zero. While one could attempt to encode these zero constraints into the utility function $U$ (e.g., by setting $U(\Psi) = -\infty$ if $\Psi_j \neq 0$ for any intermediate $j$), doing so effectively coupling between the markets. Consequently, the primary computational advantage of the decomposition method is lost when addressing the hard constraints required by practical token swaps. Therefore, their algorithm can not be applied to real-world token swap problem.
}

\section{The PRIME Algorithm}
\label{sec:algorithm}

Following the solutions reviewed in Section~\ref{sec:existing}, we present \textbf{PRIME}, our algorithm for solving the Token Graph Routing Problem.

\subsection{Overview of Algorithm and Rationale}
\label{subsec: Overview of Algorithm}
\rTwo{
In this section, we firstly provide an overview of our algorithm. PRIME is a two-stage framework supported by a graph preprocessing phase to address the scalability of the problem. The overal workflow is introduced as follows:

\spara{Stage 0: Graph Preprocessing} In the off-line stage 0, we preprocess the token graph $G=(V,E,F)$ into two components: 
\begin{enumerate}
    \item \textbf{Core Graph}: We define a set of Hubs $\mathcal{H} \subset V$, consisting of top-$K$ assets connected to pools with highest total liquidity (e.g., WETH, USDC, USDT). For any routing query from source token $s$ to target token $t$, we execute the procedure \textsc{InduceCoreGraph} to construct the subgraph $G_{core}$ induced by the vertices $\mathcal{H} \cup \{s, t\}$. This core graph can capture the main routing paths from the source to the target.
    \item \textbf{Shortcuts Index}: While $G_{core}$ captures the main routing paths, the path with higher price through non-core tokens not included in $\mathcal{H}$. We define a \textit{Shortcut} as a single-edge path $P = (h_i, v_1, \dots, v_n, h_j)$ where $h_i, h_j \in \mathcal{H}$ and intermediate nodes $v \notin \mathcal{H}$. We employ the \textsc{BuildShortcutIndex} function to pre-compute these connections and store them in the index $\mathcal{I}_{sc}$. During routing, $\mathcal{I}_{sc}$ allows the algorithm to integrate these superior shortcuts in constant time.
\end{enumerate}

\begin{algorithm}[t]
    \DontPrintSemicolon
    \SetKwInOut{Input}{Input}
    \SetKwInOut{Output}{Output}
    \Input{Graph $G=(V,E,F)$, Source $s$, Target $t$, Input Amount $x$, Hubs $\mathcal{H}$}
    \Output{Optimal Allocation $(\mathcal{P}_{multi}, W^*, \{w_e\}^*)$}
    \BlankLine
    \tcp{Stage 0: Graph Preprocess }
    $\mathcal{I}_{sc} \gets \textsc{BuildShortcutIndex}(G, \mathcal{H})$\;
    $G_{core} \gets \textsc{InduceCoreGraph}(\mathcal{H}, s, t)$\;
    
    \BlankLine
    \tcp{Stage 1: Iterative Path Discovery}
    Initialize: $\mathcal{P}_{single} \gets \emptyset$, Threshold $\tau \gets 0$\;
    \While{true}{
        $p_{new} \gets \textsc{FindPath}(G_{core} \cup \mathcal{I}_{sc}, s, t, \tau)$\;
        \If{$p_{new} = \text{null}$ \textbf{or} $\text{MarginalPrice}(p_{new}) < \tau$}{
            \textbf{break}\;
        }
        $\mathcal{P}_{single} \gets \mathcal{P}_{single} \cup \{p_{new}\}$\;
        $(\Delta_{out}, \tau) \gets \textsc{ASGM}(\mathcal{P}_{single}, x)$\;
    }
    
    \BlankLine
    \tcp{Stage 2: Optimization and Solution}
    $\mathcal{P}_{multi} \gets \textsc{MergeAndExpand}(\mathcal{P}_{single}, \mathcal{I}_{sc})$\;
    $W^*, \{w_e\}^* \gets \textsc{ASGM}(\mathcal{P}_{multi}, x)$\;

    \Return{$(\mathcal{P}_{multi}, W^*, \{w_e\}^*)$}
    \caption{\textsc{PRIME}}
    \label{alg:prime_overview}
\end{algorithm}

\spara{Stage 1: Iterative Path Discovery} Based on the graph preprocessing results, the first stage of the routing algorithm aims to identify a candidate paths set $\mathcal{P}_{single}$. We perform a Breadth-First Search (BFS) pruning algorithm (Algorithm \ref{alg:bfs-pruning}) on the core graph $G_{core}$. The algorithm iteratively adds a new path to the set only if it provides a marginal price improvement over the current threshold $\tau$. This threshold is updated using the Adaptive Sign Gradient Method (ASGM) in each iteration. Finally, the algorithm checks the shortcuts index $\mathcal{I}_{sc}$ if any shortcut can provide a better price between hubs in the candidate paths and update the path set.

\spara{Stage 2: Optimization and Solution} In the second stage, the algorithm proceeds to optimize the marginal price at a finer granularity (Lines 10-11). It first performs a topological transformation via \textsc{MergeAndExpand}, which merges paths sharing identical token sequences and expands every hop with alternative parallel edges. This transforms the initial single-edge candidates into a multi-edge path set $\mathcal{P}_{multi}$. Finally, we employ the Adaptive Sign Gradient Method (ASGM) to strictly optimize the allocation of the input amount across these paths and parallel edges (Line 11), maximizing the total output until the marginal prices converge.

The computational complexity of PRIME is optimized by restricting path discovery to the Core Graph. The total time complexity is $O(K \cdot |V_{core}| \cdot |E_{core}| + K^3 \log(1/\epsilon))$, where $K$ is the number of identified paths, $|V_{core}|$ and $|E_{core}|$ are the number of vertices and edges in the core graph, respectively. A key design advantage is the early termination criterion in Stage 1, which halts the process once new paths no longer yield a marginal price improvement. This ensures efficiency for small trades where the optimal number of paths $K$ is small.

}

\begin{algorithm}[!bpt]
    \DontPrintSemicolon
    \SetKwInOut{Input}{Input}
    \SetKwInOut{Output}{Output}
    
    \Input{Graph $G_{core}$, Source $s$, Target $t$, Amount $x$, Threshold $\tau$}
    \Output{A Single-Edge Path $p^*$ or \text{null}}
    \BlankLine
    $Q.\text{push}(s, x, \emptyset)$ and $Best[\cdot] \gets 0$\;
    \While{$Q$ is not empty}{
        $(u, \Delta_{cur}, p) \gets Q.\text{pop()}$\;
        \If{$u == t$}{
            \If{$(\Delta_{cur} / x) > \tau$}{ \Return $p$ }
            \textbf{continue}\;
        }
        \ForEach{edge $e=(u, v) \in G_{core}$}{
            $\Delta_{next} \gets \rFive{f_e}(e, \Delta_{cur})$\;
            \If{$\text{len}(p) < m$ \textbf{and} $\Delta_{next} > Best[v]$}{
                $Best[v] \gets \Delta_{next}$\;
                $Q.\text{push}(v, \Delta_{next}, p + e)$\;
            }
        }
    }
    \Return \text{null}
    \caption{\rFive{\textsc{FindPath}}}
    \label{alg:bfs-pruning}
    \end{algorithm}

\subsection{Path Discovery}
\label{subsec: Path Discovery}
To efficiently discover a path with a better marginal price from the source to the target, we introduce an efficient \textsc{FindPath} algorithm (Algorithm \ref{alg:bfs-pruning}) via a Breadth-First-Search (BFS) pruning approach. While a naive enumeration of all paths is computationally infeasible due to the exponential growth of cycles in the token graph, our approach adapts the Shortest Path Faster Algorithm (SPFA)~\cite{ahuja1990faster} for the token swap problem. SPFA optimizes the standard Bellman-Ford algorithm by utilizing a queue for breadth-first traversal and updating (``relaxing'') a node only when a strictly better path is discovered. In our implementation, we maintain a map $Best[\cdot]$ to record the maximum output achieved for each token during the search. Any path reaching a token with an output lower than the current best is pruned. For instance, if there are two different paths from the source to an intermediate token $A$, where the first yields 10 units and the second yields only 8, the algorithm immediately discards the second path and only extends the first. This dominance pruning significantly reduces the search space by selectively exploring only promising branches.

The time complexity of this algorithm is bounded by $O(|V_{core}|\cdot|E_{core}|)$. \rFive{However, in practical DeFi scenarios, we enforce a maximum hop count $m$ (typically 3 or 4) based on empirical observations: within the densely connected Core Graph, paths exceeding this depth suffer from excessive slippage. This constraint, combined with dominance pruning, rapidly reduces the candidate set within the first few layers.} Consequently, the average complexity approaches $O(k|E_{core}|)$ where $k$ is a small constant relative to $m$, allowing the algorithm to efficiently identify the optimal path.

\begin{algorithm}[!bpt]
    \DontPrintSemicolon
    \SetKwProg{Fn}{Function}{}{}
    \SetKwInOut{Input}{Input}
    \SetKwInOut{Output}{Output}
    \Input{Discovered Paths $\mathcal{P}^*$ \\
        Input quantity \( x > 0 \),\\ Minimum Learning Rate \(\alpha\) \\
        Max Iterations \( T \), Decay Factor \(\beta\)
    }
    \Output{Optimal Weights $W^*$}
    \BlankLine
    Initialize: $W_0 \gets \left[\frac{1}{|\mathcal{P}|}, \dots, \frac{1}{|\mathcal{P}|}\right]$, $t \gets 0$\;
    $\mathrm{SIGN} \gets \mathbf{0}$\;
    \While{$t < T$ \textnormal{ and $W_{t}$ not converged}}{
      \ForEach{path $p \in \mathcal{P^*}$}{
        Optimize \rFive{$w_e$ for edges in path $p$} via ASGM\;
        Compute marginal price: $g_p \gets f_p'(W_p \cdot x)$
      }
      Identify paths: 
        $p^- \gets \arg\min g_p$\ and $p^+ \gets \arg\max g_p$\;
      
      $\mathrm{SIGN}_{p^+,p^-} \gets [1,-1]$\;
      $\Delta \gets \Delta_0$
      
      \Repeat{$J(W_{t} + \Delta \cdot\mathrm{SIGN} ) \geq J(W_{t}) + \alpha \Delta\cdot x(g_{p^+} - g_{p^-})$}
      {
        $\Delta = \Delta \times \beta$
      }
      $t \gets t + 1$\;
      $W_{t} \gets W_{t-1} + \mathrm{SIGN}\cdot\Delta$\;
      $\mathrm{SIGN} \gets \mathbf{0}$\;
    }
        $W^{*}\gets W_t$
    \caption{\textsc{ASGM}}
    \label{alg:asg}
  \end{algorithm}

\subsection{Allocation Optimization}
\label{subsec: Allocation Optimization}

After introducing the above method to efficiently discover the path, the set of paths $\mathcal{P}$ is fixed. We now focus on the optimization problem of allocating the swap amount across each path and at each hop. According to Definition~\ref{def:tsrp}, the Token Graph Routing Problem can be reduced to the following Token Allocation problem:
\begin{equation}  
\begin{aligned}  
\max_{W_p, w_e} J(W) &=  \sum_{p \in \mathcal{P}} f_p(W_p \cdot x), \\  
\text{s.t.}  \quad  \sum_{p \in \mathcal{P}} W_p &= 1, W_p \geq 0, \\  
\quad \sum_{e \in E_i} w_e &= 1, w_e \geq 0. \\
\end{aligned}  
\end{equation}

Here, $W = [W_p]_{p \in \mathcal{P}}$ is the vector of allocation weights for each path, and $f_p(\cdot)$ is the output function for path $p$. As introduced in \ref{def: swap function}, the swap function are strongly concave. Since the sum of (strongly) concave functions preserves (strong) concavity, the overall objective function $J(W)$ is strongly concave. The constraint $\Omega = \{\sum_{p \in \mathcal{P}} W_p = 1, W_p \geq 0; \rFive{\sum_{e \in E_i} w_e = 1, w_e \geq 0, \forall i}\}$ forms a high-dimensional simplex, \rFive{where $E_i$ denotes the set of parallel edges at hop $i$ of a path}. The optimality condition for this problem can be characterized as follows:

\begin{lemma}[Optimal Condition]
\label{lem:optimal_condition}
A feasible allocation for the token allocation problem is optimal if and only if the marginal prices are equal across all components (i.e., paths and parallel edges).\end{lemma}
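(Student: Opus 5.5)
This is a KKT argument on a concave program over a product of simplices. Throughout, ``equal marginal prices'' is read as equality among the \emph{active} components (those with $W_p>0$, resp.\ $w_e>0$), with any inactive component having marginal price no larger than the common value. This is the only reading under which the statement can be true when some path is unused at the optimum.

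\textbf{Concavity and the exact optimality criterion.} First we check that $J$ is concave on the feasible set $\Omega$. By Definition~\ref{def: swap function}, each $f_e$ is concave and increasing. Each hop function $f_{E_i}$ is a sum of $f_e(w_e x)$, and composing such hops gives $f_p$. The paper already notes that these operations preserve concavity in the input amount. What we actually need is concavity jointly in the inner weights $w_e$ and in the flow entering a hop. For this we rewrite the problem in flow variables: let $y_e = w_e\cdot(\text{inflow to hop } E_i)$. The constraints then become linear flow-conservation equalities, and the objective becomes a composition of concave increasing maps. The feasible set is a polytope and the problem is a concave maximization. Slater's condition is not needed because all constraints are linear. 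Hence KKT is necessary and sufficient: a feasible point is optimal if and only if there exist multipliers $\lambda$ for the equality constraints and $\mu\ge 0$ for the nonnegativity constraints, with complementary slackness.

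\textbf{Paths.} Write $\partial J/\partial W_p = x f_p'(W_p x) = x\,g_p$. Stationarity gives $x g_p = \lambda - \mu_p$ with $\mu_p\ge 0$ and $\mu_p W_p = 0$. So every active path has $g_p = \lambda/x$, and every inactive path has $g_p\le \lambda/x$.

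\textbf{Parallel edges.} Fix a hop $E_i$ with inflow $z$. The marginal price of edge $e$ is $f_e'(w_e z)$, scaled by the chain factor of the downstream hops. That factor is common to all edges in the hop, so stationarity again forces equal $f_e'$ on active edges and a smaller or equal value on inactive ones.

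\textbf{Converse.} If the equal-marginal-price condition holds, we define $\lambda$ as the common value and set $\mu$ to the resulting slack. This verifies KKT, and concavity then gives global optimality. A direct alternative is the first-order inequality $J(W')\le J(W)+\nabla J(W)^\top(W'-W)$. The right-hand inner product is $\le 0$ because $\sum_p (W'_p - W_p)=0$, active coordinates share a common gradient, and inactive coordinates can only increase while having smaller gradient.

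\textbf{Main obstacle.} The hard step is joint concavity in the hierarchical weights $(W,w)$, since the products $w_e W_p x$ are bilinear. The flow reformulation removes this. It requires checking that composing a concave increasing hop map with a sum of concave maps evaluated on linear flows stays jointly concave. The check works because $\sum_e f_e(y_e)$ is jointly concave in $(y_e)$, and feeding its output into a concave nondecreasing function preserves concavity.

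Strict concavity is not needed for this lemma, only for uniqueness. At points where $f_e$ is not differentiable, such as Uniswap~V3 tick boundaries, we would replace derivatives by one-sided derivatives. We would state the lemma under the differentiability the paper implicitly assumes.
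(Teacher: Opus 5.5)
The gap is in the step you yourself flag as the main obstacle. After substituting $y_e = w_e\cdot(\text{inflow})$, only the first hop is fed linearly: $\sum_{e\in E_1} y_e = W_p x$. Every later hop is fed by the output of the previous one, $\sum_{e\in E_{i+1}} y_e = \sum_{e\in E_i} f_e(y_e)$, which is a nonlinear equality whose solution set is not convex. So already for a two-hop path with parallel edges at the second hop, the feasible set is not a polytope. Your remark that no constraint qualification is needed is therefore false, and ``KKT is necessary and sufficient'' does not follow.

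Your composition remark does not close this. Feeding the jointly concave $\sum_e f_e(y_e)$ into a concave nondecreasing map gives concavity only if that map is a fixed scalar function of the inflow, i.e.\ after the downstream split variables have been optimized out. If they are kept as decision variables, they destroy joint concavity: as $w$'s through the product $w_e z$, or as $y$'s through the equality above.

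A smaller point concerns unused paths. On an unused path the split $w$ is arbitrary at an optimum, so edge-level equality cannot be necessary there. ``Active'' should mean positive flow, not $w_e>0$, and an unused path's marginal price should be evaluated with its best split, $\prod_i \max_{e\in E_i} f_e'(0)$. Otherwise your reading of the statement is the right one, and it is what the paper's exchange argument uses (``with positive allocation'').

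Either of two repairs works.

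(i) Relax each inter-hop equality to $\sum_{e\in E_{i+1}} y_e \le \sum_{e\in E_i} f_e(y_e)$. This constraint is convex, and it is tight at every optimum because the $f_e$ are strictly increasing. Slater's condition holds (take all flows positive with a little slack). The multiplier of the $i$-th inequality is exactly your downstream chain factor, and it is positive by strict monotonicity.

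(ii) Nest the problem. Let $\phi_i(z)=\max\{\sum_{e\in E_i} f_e(y_e) : \sum_e y_e = z,\ y\ge 0\}$, which is concave and nondecreasing by partial maximization. Suppose the active edges of hop $i$ share marginal $\lambda_i$ and the others are $\le\lambda_i$. Then concavity of the $f_e$ gives $\phi_i(z')\le\phi_i(z_i)+\lambda_i(z'-z_i)$ for all $z'\ge 0$, so the hop attains $\phi_i(z_i)$ and $\lambda_i$ is a supergradient there. By monotonicity, $g_p=\prod_i\lambda_i$ is then a supergradient of $\Phi_p=\phi_k\circ\cdots\circ\phi_1$ at $W_p x$. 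Since each hop's output is at most $\phi_i$ of its inflow, $J(W',w')\le\sum_p\Phi_p(W'_p x)$ for every feasible $(W',w')$. Your first-order simplex argument now gives optimality jointly in $(W,w)$.

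Necessity needs no concavity at all. In the original $(W,w)$ variables the constraints form a product of simplices, hence are linear, so KKT holds at any maximizer of the differentiable objective; this is the paper's reallocation argument. The paper then simply asserts (strict) concavity of $J$ and invokes KKT and uniqueness without addressing the $w_e z$ coupling. So once the concavity step is repaired, your KKT route is the paper's argument made rigorous.
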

\begin{proof}
The proof of Lemma~\ref{lem:optimal_condition} is straightforward. If marginal prices are unequal across paths or parallel edges with positive allocation, reallocating input from components with lower marginal prices to those with higher marginal prices would increase the objective value, contradicting optimality. Furthermore, since the objective function is strictly concave and the feasible region is a convex set, according to Karush–Kuhn–Tucker's condition~\cite{boyd2004convex}, the optimal solution is unique. This guarantees that the allocation achieving equal marginal prices is the only global optimum.
\end{proof}

The primary challenge in solving this optimization problem is the vast scale difference in token values. Specifically, the price of Bitcoin (BTC) can be on the order of $10^{5}$ USD, while that of Shiba Inu (SHIB) is around $10^{-6}$ USD. \rTwo{From the standpoint of numerical optimization, this $10^{11}$ scale disparity renders the objective function ill-conditioned~\cite{boyd2004convex}. Specifically, the Hessian matrix of the objective function exhibits a massive condition number, causing the curvature to vary drastically across different dimensions. In such ill-conditioned landscapes, traditional gradient-based methods fail.} To address this, we propose the \textbf{Adaptive Sign Gradient Method (ASGM)}, a robust algorithm that is robust to such scale differences. The core idea of ASGM is to continuously shift swap amounts from paths with worst price to those with better price. At every step, the algorithm identifies two paths: the least efficient (highest cost) and the most efficient (lowest cost) path. By rebalancing an optimal amount of tokens from the less efficient path to the more efficient one, ASGM gradually reduces price differences across all paths. This process repeats until all paths become equally efficient—meaning no further output can be made by reallocating tokens. At this point, the system reaches an optimal overall swap price. 

Algorithm~\ref{alg:asg} introduces ASGM in detail. First, the total swap amount is equally distributed across all paths and parallel edges. During each iteration, the algorithm computes the marginal price $g_p$ for every path $p$ and identifies two critical paths: $p^+$ (the path with the highest marginal price) and $p^-$ (the path with the lowest marginal price). The goal is to iteratively adjust allocations to minimize the price gap in the paths. A direction vector $\text{SIGN}$ specifies the adjustment: $p^+$ reduces its allocation while $p^-$ increases its allocation. The adjustment amount is controlled by an adaptive step size $\Delta$, which is initially set to a large pre-determined value for exploration and decays exponentially until satisfying the Armijo condition~\cite{armijo1966minimization}:
\begin{equation}
    \begin{aligned}
        J(W_{t} + \Delta \cdot \text{SIGN}) &\geq J(W_{t}) + \alpha \Delta \cdot \nabla J(W_t)^\top \text{SIGN}, \\
        \nabla J(W)^\top \cdot \text{SIGN} &= x(g_{p^+} - g_{p^-}),
    \end{aligned}
\end{equation}
where $J$ is the objective function and $W_t$ is the allocation at iteration $t$. The parameter $\alpha \in (0,1)$ dictates the minimum fraction of progress required. The Armijo condition is a criterion ensuring that the step size $\Delta$ achieves a sufficient increase in the objective function. It verifies that the actual improvement gained is at least a fraction $\alpha$ of the improvement predicted by the function's linear approximation. Our algorithm employs a backtracking line search, starting with a large $\Delta$ and decaying it until this condition is met, thus preventing overly large steps that might overshoot the optimum while still ensuring meaningful progress in each iteration. This learning-rate-free design is the core innovation of ASGM, making it inherently robust to the vast scale differences in token values. In addition to this, we also analyze the convergence properties of ASGM and establish its convergence rate.

\begin{theorem}[Convergence of ASGM]
\label{thm:convergence}
For the Token Allocation Problem with a strongly concave objective function, the Adaptive Sign Gradient Method (ASGM) converges to the optimal allocation $W^*$ at a linear rate.
\end{theorem}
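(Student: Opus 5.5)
The plan is the standard linear-convergence argument for a greedy coordinate-pair (Frank--Wolfe/block-coordinate style) method on a strongly concave, smooth objective over the simplex. Throughout we work with the concave maximization $J$ on the feasible set $\Omega$ and the gap $h_t = J(W^*) - J(W_t) \ge 0$. Since every iterate satisfies the Armijo condition, $J(W_t)$ is nondecreasing, so all iterates stay in the bounded superlevel set $\{W\in\Omega : J(W)\ge J(W_0)\}$. On this set we use two constants. First, strong concavity with modulus $\mu>0$, which the paper has already argued from the bounded input range. Second, an $L$-Lipschitz gradient, i.e.\ $J(W+d) \ge J(W) + \nabla J(W)^\top d - \tfrac{L}{2}\|d\|^2$. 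This holds because each $f_p$ is twice continuously differentiable on the compact domain. The inner optimization over the edge weights $w_e$ is handled in the same way, by treating $(W,w)$ jointly as one point of the product of simplices. The argument below is written for the path weights only.

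\textbf{Step 1: sufficient increase per iteration.} With direction $d=\mathrm{SIGN}$ (one $+1$ and one $-1$, so $\|d\|^2=2$), write $\delta_t := x(g_{p^+}-g_{p^-})$. Then $J(W_t+\Delta d) \ge J(W_t) + \Delta\delta_t - L\Delta^2$. The Armijo test is therefore satisfied as soon as $\Delta \le (1-\alpha)\delta_t/L$. Backtracking thus returns $\Delta_t \ge \min\{\Delta_0,\ \beta(1-\alpha)\delta_t/L\}$, capped by the feasibility cap $W_{p^-}$ when the shrinking path hits zero. The step then gives $J(W_{t+1})-J(W_t) \ge \alpha\Delta_t\delta_t \ge c\,\delta_t^2$ with $c=\alpha\beta(1-\alpha)/L$ in the non-truncated case.

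\textbf{Step 2: bound the gap by the pairwise gap.} From strong concavity, for any feasible $W$ we have $h_t \le \nabla J(W_t)^\top(W^*-W_t) - \tfrac{\mu}{2}\|W^*-W_t\|^2$. The direction $W^*-W_t$ lies in the tangent space of the simplex, so it decomposes into pair moves from paths with $W^*_p<W_{t,p}$ to paths with $W^*_p>W_{t,p}$. This gives $\nabla J^\top(W^*-W_t) \le \tfrac{1}{2}\delta_t'\,\|W^*-W_t\|_1$, where $\delta_t'$ is the gap between the largest and smallest marginal price over paths, restricted on the losing side to paths with positive weight. Combining with $\|\cdot\|_1\le\sqrt{|\mathcal P|}\,\|\cdot\|_2$ and maximizing the quadratic in $\|W^*-W_t\|$ yields $h_t \le \frac{|\mathcal P|\,x^{-2}\delta_t^2}{8\mu}\cdot x^2 = O(\delta_t^2/\mu)$. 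Substituting into Step 1 gives $h_{t+1} \le \bigl(1 - \tfrac{8c\mu}{|\mathcal P|}\bigr)h_t$, which is a linear rate with the condition-number dependence $\mu/L$.

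\textbf{Main obstacle.} The hard part is the boundary, handled in the same way as pairwise Frank--Wolfe and away-step analyses. First, $p^-$ must be chosen among paths with positive weight, otherwise $\delta_t$ is not a valid certificate: KKT, as in Lemma~\ref{lem:optimal_condition}, only requires equal prices on the support. Second, a step can be truncated at $W_{p^-}=0$ ("drop steps") and then make no guaranteed progress. I would handle this by counting drop steps. Each drop step removes a path from the support, and a path can re-enter only through a full step, so at most half of the steps are drop steps and the linear rate survives with the exponent halved. Since $J$ is monotone, drop steps never increase $h_t$. I would also note that linear convergence of $J$ transfers to the iterates via $\|W_t-W^*\|^2 \le 2h_t/\mu$. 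This gives convergence to the unique optimal allocation $W^*$ asserted in Theorem~\ref{thm:convergence}.
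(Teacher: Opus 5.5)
Up to the boundary handling, your route is the paper's: Armijo sufficient increase, a backtracking lower bound on $\Delta_t$ from the $L$-smoothness inequality, and a strong-concavity bound of the gap by the squared pairwise gap. Your $\ell_1$ decomposition of $W^*-W_t$, giving $h_t\le|\mathcal{P}|\,\delta_t^2/(8\mu)$, is a self-contained substitute for the appendix's appeal to the pyramidal width of the simplex. That width is of order $1/\sqrt{|\mathcal{P}|}$, so the resulting rates have the same $\mu/(|\mathcal{P}|L)$ dependence. You are also right, where the appendix is silent, that $\beta$ belongs in the step bound and that $p^-$ must range over the support.

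The appendix simply assumes the gradient branch of the step bound and never treats truncation. Your drop-step paragraph therefore goes beyond it, but as written it does not go through. In ASGM the receiving path $p^+$ is the argmax of $g_p$ over \emph{all} paths, so it may currently have zero weight. A step truncated at $\Delta=W_{p^-}$ then removes $p^-$ and adds $p^+$ in the same iteration. Such swap steps leave the support size unchanged, so your claim that a path can re-enter only through a full step is false. Your counting (drop steps shrink the support, only full steps enlarge it) bounds only the non-swap drop steps. The halving argument you invoke is the one for away-step Frank--Wolfe, where additions and removals happen in different steps; it does not cover pairwise updates. Also, ASGM starts from the uniform allocation, so up to $|\mathcal{P}|-1$ drop steps can precede the first good step. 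Hence ``at most half'' needs an additive term.

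The standard repair is the one Lacoste-Julien and Jaggi use for pairwise Frank--Wolfe. A swap moves the whole mass $W_{p^-}$ onto a zero-weight path, so it only permutes the multiset of positive weights. Since $J$ strictly increases, no allocation repeats, and a run of consecutive swaps has length at most $|\mathcal{P}|!$. This gives $h_t\le(1-\rho)^{k(t)}h_0$ with $k(t)\ge t/(3|\mathcal{P}|!+1)$ good steps: still linear in $t$, but with a factorial constant. Alternatively, replace the pairwise move by separate Frank--Wolfe and away moves, for which your halving count (plus the additive term) is exactly the known bound.

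Two smaller points remain. First, the $\Delta_0$ branch of your step bound is not covered by $c\,\delta_t^2$. However, $h_t\le\nabla J(W_t)^\top(W^*-W_t)\le\tfrac12\delta_t\|W^*-W_t\|_1\le\delta_t$ gives progress $\alpha\Delta_0\delta_t\ge\alpha\Delta_0h_t$ there. Second, your aside that the edge weights are handled by treating $(W,w)$ jointly on a product of simplices is wrong. The term $f_e(w_eW_px)$ has a bilinear argument and is not jointly concave in $(w_e,W_p)$. Read the theorem, as the appendix does, for the path weights with the inner weights optimized out; this keeps each $f_p$ concave.
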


\begin{proof}
We prove the theorem relying on the strong concavity of the objective function and the simplex constraint. For problems with this specific structure, it can be shown that the improvement gained in each step is proportional to the current optimality gap. This proportionality forces the gap to decrease exponentially, which represents a linear convergence rate. The theoretical foundation for the linear convergence of this class of update algorithms was established by Lacoste-Julien and Jaggi~\cite{lacoste2015global}. A complete step-by-step derivation and detailed convergence analysis for ASGM are provided in Appendix~\ref{app:convergence}.
\end{proof}

\section{Experiment}
\label{sec:experiment}


In this section, we comprehensively evaluate the performance of our proposed algorithm based on a series of experiments on real-world blockchain data. Our evaluation is designed to assess three key dimensions: \textbf{Effectiveness}, measured by the execution price; \textbf{Efficiency}, measured by the algorithm's computation time; and \eat{\textbf{Robustness}} \textbf{Stability}, demonstrated by consistent performance across diverse and volatile market conditions. We benchmark PRIME against several baseline algorithms to systematically validate its advantages. Additional sensitivity analyses of ASGM hyperparameters are deferred to Appendix~\ref{app:ablation_lr_decay}.

\subsection{Experiment Setting}
\label{subsec: experiment setting}

\rTwo{
\spara{Dataset and Market Condition} Our experiments are grounded in historical data retrieved from an Ethereum mainnet archival node. To ensure our evaluation covers diverse market dynamics, we collected complete DEX pool state snapshots from three distinct one-week periods: a bearish market (Mar '25), a stable market (May '25), and a bullish market (Jul '25). We focus on the two dominant CFMM algorithms: the Constant Product formula (V2)~\cite{uniswapv2core} and the Concentrated Liquidity model (V3)~\cite{adams2021uniswap}, covering major protocols including Uniswap, SushiSwap, and PancakeSwap~\cite{sushiswap,pancakeswap}. To ensure calculation accuracy, we implement the swap logic in Rust using the \textit{alloy} crate's \texttt{U256} type, which provides sufficient precision to mirror the fixed-point arithmetic of Solidity.

Table~\ref{tab:dataset_stats} summarizes the dataset statistics. While the raw on-chain data is massive, directly routing on it is impractical. In practice, we observe that 96.65\% of tokens connect exclusively to WETH. These tokens act as leaf nodes and are pruned to reduce the search space. Furthermore, we exclude \textit{fee-on-transfer} tokens~\cite{nolo2024feontransfer}, as these tokens will charge tax on each transfer, causing discrepancies between the CFMM-calculated output and the actual execution amount. Consequently, the effective routing graph is smaller, consisting of approximately 11,000 tokens and 25,000 pools.
}

\begin{table}[!tbp]
    \centering
    \caption{\rTwo{Dataset Statistics (in thousands). The table details the raw graph size and pool distribution across protocols (V2/V3) under different market conditions.}}
    \label{tab:dataset_stats}
    \renewcommand{\arraystretch}{1.3} 
    \setlength{\tabcolsep}{4pt}
    \begin{tabular}{l c c cc cc cc}
    \toprule
    \multirow{2}{*}{\textbf{Market}} & \textbf{Nodes} & \textbf{Edges} & \multicolumn{2}{c}{\textbf{Uniswap}} & \multicolumn{2}{c}{\textbf{Sushi}} & \multicolumn{2}{c}{\textbf{Pancake}} \\
    \cmidrule(lr){4-5} \cmidrule(lr){6-7} \cmidrule(lr){8-9}
        & (Tokens) & (Pools) & V2 & V3 & V2 & V3 & V2 & V3 \\
    \midrule
    Bearish & 402.7k & 428.4k & 392.2k & 30.5k & 4.1k & 0.4k & 0.7k & 0.4k \\
    Stable  & 418.7k & 445.1k & 406.8k & 31.8k & 4.2k & 0.5k & 1.4k & 0.5k \\
    Bullish & 448.5k & 475.6k & 425.6k & 40.3k & 4.2k & 0.5k & 4.3k & 0.6k \\
    \bottomrule
    \end{tabular}
    \end{table}

\spara{Environment}
Experiments are performed on a server running Ubuntu 22.04.5. The server hardware includes a 10-core Intel Xeon Silver 4210 CPU operating at 2.20GHz. It is equipped with 64 GiB of DDR4 RAM running at 2666 MHz. Our algorithm utilizes the Rust 1.92.0 runtime environment for execution and we use Gurobi 12.0.3 to solve the convex optimization problem. 

\spara{Pairs Selection}
To assess the algorithms’ generalizability, we selected a set of four tokens representing distinct categories of crypto assets. We include WBTC (Wrapped Bitcoin) and USDT (Tether) as foundational assets representing high market cap and a dominant stablecoin, respectively. We also choose ChainLink (LINK), one of the largest utility tokens. To capture assets heavily influenced by market sentiment, we include PEPE, which is the top meme token by market capitalization on the Ethereum blockchain. Unlike the other tokens, which feature deep and fragmented liquidity across numerous pools, PEPE's liquidity is primarily concentrated in a few large, dominant pools. For each token in this set, we test trades against WETH across a logarithmic scale of input amounts: $1$, $10$, $100$, and $1,000$ WETH. This range allows us to analyze algorithmic performance in retail to institutional scales.

\subsection{Baselines}
\label{subsec:Baselines}
We conduct a rigorous performance evaluation of PRIME, benchmarking it against established algorithms and internal variants to validate our design choices. Our comparison covers four key categories: a state-of-the-art academic single-path algorithm,the industry standard, a general flow-based variant, an ablation version of our algorithm, and a theoretical convex optimization method. The performance gains are quantified in basis points (bp). A basis point is a standard financial unit representing one-hundredth of a percentage point (0.01\%).

\rThree{
\spara{Optimal Single Path: Line Graph Routing} To evaluate the necessity of our multi-path routing framework, we compare our approach against the optimal single-path routing algorithm proposed by Zhang et al.~\cite{zhang2025line}. By transforming the token graph into a line graph representation, this method guarantees the identification of the single path with the highest global execution price. This baseline provides the theoretical upper bound of routing output without splitting input across multiple paths.
}

\begin{table*}[!htbp]
    \centering
    \sisetup{group-separator={,}}
    \small 
    \setlength{\tabcolsep}{0pt}
    
    \caption{
        \rFour{
            Performance comparison of routing algorithms across different trade sizes for WETH. 
            \textbf{OSP} serves as the baseline. 
            For SOR, PRIME-Flow, PRIME (No Shortcut), and PRIME, prices are shown with the basis point (bp) difference relative to OSP in parentheses. 
            Execution time is reported in milliseconds (ms).
            For readability, prices for WBTC are presented as value \(\times 10^3\), and for PEPE as value in millions.
        }
    }
    \label{tab:main_results_final}
    
    \newcommand{\valbp}[2]{#1 \scriptsize{(#2)}}
    
    \begin{tabular*}{\textwidth}{@{\extracolsep{\fill}} l lllrlrlrlr }
    \toprule
    
    & & \rThree{\textbf{OSP~\cite{zhang2025line}}} & \multicolumn{2}{c}{\textbf{SOR~\cite{uniswap_routing}}} & \multicolumn{2}{c}{\rFour{\textbf{PRIME-Flow}}} & \multicolumn{2}{c}{\textbf{PRIME-Core}} & \multicolumn{2}{c}{\textbf{PRIME}} \\
    \cmidrule{3-3} \cmidrule{4-5} \cmidrule{6-7} \cmidrule{8-9} \cmidrule{10-11}
    
    \multirow{2}{*}{\textbf{Token}} & \multirow{2}{*}{\textbf{WETH}} & \textbf{Price} & \textbf{Price} & \textbf{Time} & \textbf{Price} & \textbf{Time} & \textbf{Price} & \textbf{Time} & \textbf{Price} & \textbf{Time} \\
    & & \scriptsize{(Baseline)} & \scriptsize{(bp diff.)} & \scriptsize{(ms)} & \scriptsize{(bp diff.)} & \scriptsize{(ms)} & \scriptsize{(bp diff.)} & \scriptsize{(ms)} & \scriptsize{(bp diff.)} & \scriptsize{(ms)} \\
    \midrule
    
    \textbf{USDT} & 1    & 3,018.37 & \valbp{3,019.65}{+4.24} & 283.9 & \valbp{3,023.11}{+15.71} & 920.5 & \valbp{3,021.32}{+9.78} & 24.0 & \valbp{3,022.33}{+13.11} & 37.1 \\
    & 10   & 3,016.81 & \valbp{3,017.35}{+1.78} & 279.4 & \valbp{3,017.53}{+2.36}  & 365.5 & \valbp{3,017.35}{+1.78} & 20.0 & \valbp{3,019.35}{+8.39}  & 21.9 \\
    & 100  & 3,014.05 & \valbp{3,015.15}{+3.63} & 284.7 & \valbp{3,015.22}{+3.86}  & 348.6 & \valbp{3,015.18}{+3.73} & 22.4 & \valbp{3,015.18}{+3.74}  & 27.7 \\
    & 1000 & 2,980.19 & \valbp{3,001.99}{+73.15}& 283.1 & \valbp{3,004.13}{+80.32} & 873.5 & \valbp{3,003.07}{+76.79}& 34.2 & \valbp{3,003.41}{+77.93} & 39.2 \\
    \midrule

    \textbf{WBTC} & 1    & 25.24 & \valbp{25.24}{+2.35}  & 258.7 & \valbp{25.26}{+11.79} & 345.4 & \valbp{25.26}{+7.90} & 27.3 & \valbp{25.27}{+12.77} & 33.9 \\
    & 10   & 25.22 & \valbp{25.23}{+2.16}  & 233.5 & \valbp{25.23}{+2.58}  & 274.6 & \valbp{25.23}{+2.07} & 12.8 & \valbp{25.23}{+2.30}  & 22.2 \\
    & 100  & 25.16 & \valbp{25.19}{+12.05} & 253.3 & \valbp{25.19}{+12.92} & 225.1 & \valbp{25.19}{+12.58}& 26.8 & \valbp{25.19}{+12.58} & 32.2 \\
    & 1000 & 24.74 & \valbp{25.03}{+116.04}& 333.0 & \valbp{25.04}{+123.45}& 430.9 & \valbp{25.04}{+120.99}& 44.9 & \valbp{25.04}{+121.81}& 53.4 \\
    \midrule

    \textbf{LINK} & 1    & 187.88 & \valbp{187.88}{0.00} & 260.4 & \valbp{188.17}{+15.14} & 743.3 & \valbp{188.10}{+11.37} & 26.6 & \valbp{188.17}{+14.97} & 27.7 \\
    & 10   & 187.84 & \valbp{187.85}{+0.31} & 382.7 & \valbp{187.86}{+0.85}  & 398.0 & \valbp{187.86}{+0.68}  & 11.1 & \valbp{187.86}{+0.78}  & 11.5 \\
    & 100  & 187.44 & \valbp{187.44}{0.00}  & 370.8 & \valbp{187.45}{+0.60}  & 257.9 & \valbp{187.44}{+0.45}  & 5.7  & \valbp{187.45}{+0.62}  & 8.3 \\
    & 1000 & 183.01 & \valbp{183.01}{0.00}  & 372.7  & \valbp{183.11}{+5.80}  & 249.0 & \valbp{183.11}{+5.73}  & 11.7 & \valbp{183.15}{+7.96}  & 12.4 \\
    \midrule

    \textbf{PEPE} & 1    & 244.97 & \valbp{244.98}{+0.47} & 107.3 & \valbp{245.09}{+4.97}  & 133.7  & \valbp{245.09}{+4.87}  & 7.2 & \valbp{245.00}{+4.96}  & 10.3 \\
    & 10   & 244.61 & \valbp{244.72}{+4.31} & 111.2 & \valbp{244.72}{+4.37}  & 78.9 & \valbp{244.72}{+4.32}  & 2.6  & \valbp{244.72}{+4.32}  & 6.1 \\
    & 100  & 242.17 & \valbp{242.81}{+26.34}& 110.9 & \valbp{242.81}{+26.39} & 92.0  & \valbp{242.81}{+26.39} & 2.2  & \valbp{242.81}{+26.39} & 6.2 \\
    & 1000 & 220.20 & \valbp{225.73}{+250.90}& 116.1 & \valbp{222.38}{+259.69} & 206.6 & \valbp{225.88}{+258.07}& 12.3 & \valbp{225.92}{+259.53}& 12.5 \\
    \bottomrule
    \end{tabular*}
\end{table*}

\rFour{
\spara{Path Constraint Analysis: PRIME-Flow} To evaluate the efficiency of our parallel-path design, we introduce PRIME-Flow. This is a relaxed variant of PRIME that permits overlapping paths during the discovery phase. The algorithm iteratively identifies an augmenting path with a better marginal price and merges it into the existing flow. We employ a ternary search algorithm to determine the optimal split ratio of the input amount between the current flow and the new path~\cite{press2007numerical}. The process terminates when no path with a better marginal price can be found. This baseline serves to quantify the yield sacrificed by our pool-disjoint constraint to achieve higher computational efficiency.

}

\spara{Shortcut Index Ablation: PRIME-Core} To validate the effectiveness of the Shortcut Index, we evaluate PRIME-Core. This variant executes the standard PRIME workflow but restricts path discovery exclusively to the Core Graph. By comparing this variant against the full PRIME algorithm, we demonstrate that our Shortcut design can capture better paths through non-core tokens with negligible computational overhead.

\spara{Industry Standard: Uniswap SOR} To evaluate the practical performance of our approach, we compare it against the industry standard, Uniswap Smart Order Router (SOR)~\cite{uniswap_routing}. SOR employs a two-phase approach. First, it uses heuristics to select pools with deep liquidity and then performs a Breadth-First Search (BFS) to identify a limited set of candidate paths. Second, in the allocation phase, it transforms the routing challenge into a combinatorial optimization problem. To achieve this, SOR pre-computes the output for each path given various discretized portions of the total input (e.g., $5\%$, $10\%$, etc.). With these pre-computed values, SOR then solve a combinatorial optimization problem to find the optimal combination of splits that maximizes the total output. This baseline represents the current state-of-the-art in industrial applications.

\spara{Theoretical Optimum: Convex Optimization}
To establish a theoretical upper bound on performance, we implement the convex optimization approach from Angeris et al.~\cite{angeris2022optimal}. This method models the entire swap as a single, large-scale optimization problem. \eat{The challenge is that the exact trading function of each CFMM pool (e.g., the constant product $R_x \cdot R_y = K$ for Uniswap V2) is a non-linear equality constraint, which renders the overall problem non-convex.} The key insight is to relax these non-convex equalities into convex inequalities. For a Uniswap V2 pool, this relaxation takes the form of a geometric mean constraint: the equality $R_x \cdot R_y = K$ is replaced with the convex inequality $\sqrt{R_{x'}R_{y'}} \geq \sqrt{R_xR_y}$, where $R_{x'}$ and $R_{y'}$ are the new token reserves. This relaxation transforms the entire routing problem into a Second-Order Cone Program (SOCP)~\cite{boyd2004convex}, which is a standard form of convex optimization. Consequently, we can solve this SOCP for the globally optimal solution using modern conic solvers like Gurobi~\cite{Gurobi}.

\subsection{Main Result}
\label{subsec:Main Result}

This section evaluates the performance of PRIME against established baselines, using the results summarized in Table \ref{tab:main_results_final}. \rFour{To ensure a fair and reproducible basis for our analysis, we execute the algorithms using a one-hour time-averaged window starting at $00:00$ UTC+08:00 on July 14th, 2025.} The analysis is organized to highlight the advantages of our design choices compared to industrial standards and internal variants.

\begin{figure*}[!t]
    \centering
    
    \begin{subfigure}[b]{0.48\textwidth}
        \centering
        \includegraphics[width=\textwidth]{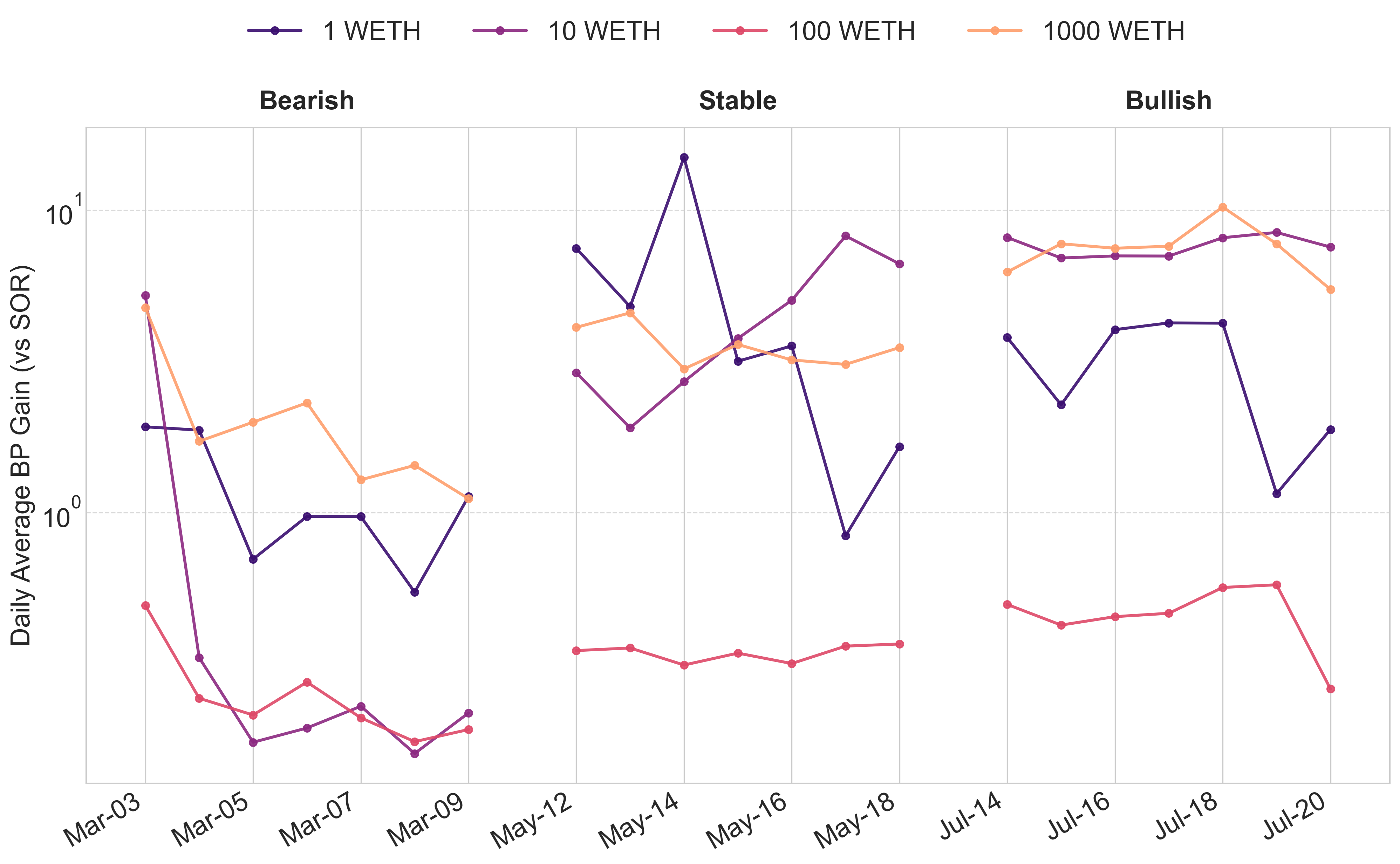}
        \caption{Price Improvement Performance} 
        \label{subfig:bp_gain_robustness}
    \end{subfigure}
    \begin{subfigure}[b]{0.48\textwidth}
        \centering
        \includegraphics[width=\textwidth]{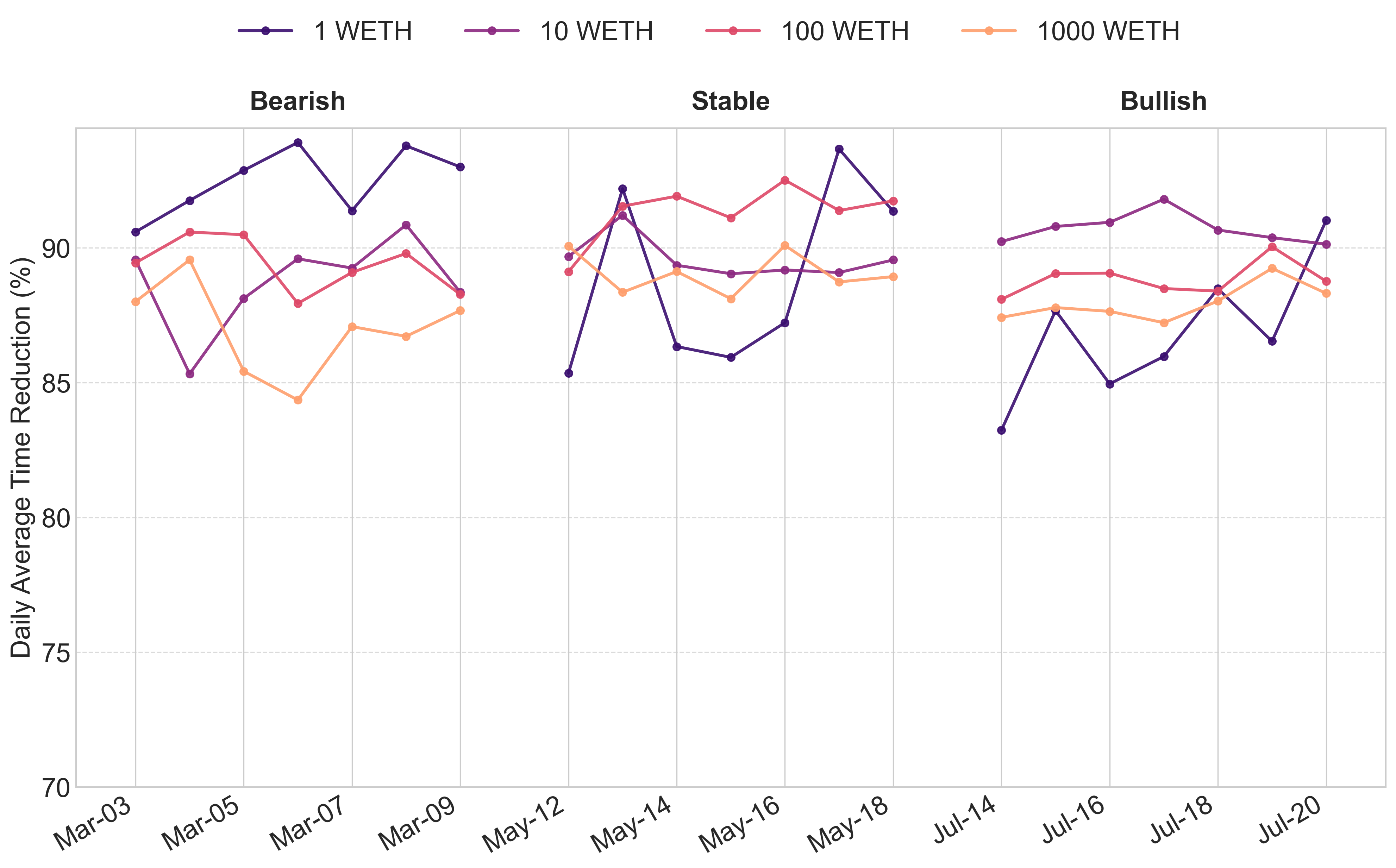}
        \caption{\rThree{Time Reduction Performance}} 
        \label{subfig:time_reduction_robustness}
    \end{subfigure}
    
    \caption{Performance of PRIME relative to SOR across three distinct market periods: Bearish, Stable, and Bullish. The figure plots the daily average performance for WETH to USDT swaps across a range of input amounts. (a) Price improvement of PRIME over SOR, measured in basis points (bp). (b) Percentage reduction in computation time achieved by PRIME.}
    \label{fig:robustness_analysis}
\end{figure*}

\spara{Comparison with Mainstream Algorithms} 
\rThree{
Table \ref{tab:main_results_final} reveals a fundamental trend: all multi-path routing algorithms (SOR and PRIME variants) consistently outperform the single-path OSP baseline across varying trade scales. This confirms the necessity of splitting flow across diverse pools to minimize slippage.} Among these multi-path solutions, PRIME demonstrates a decisive advantage over the industry-standard SOR in both execution price and computational efficiency. Regarding price, a critical observation is PRIME's price advantage in small-scale trades of 1 WETH. For example, in USDT swaps, PRIME enables relative price improvements of up to 14.97 bp in LINK and 10.42 bp in WBTC swaps over SOR's baseline. \eat{These results validate that PRIME's Shortcut Index identifies high price long-tail paths that SOR's heuristic rules may overlook.} As trade volumes scale to 1,000 WETH, PRIME's optimization continues to widen the absolute gap, outperforming SOR by an additional 8.63 bp for volatile assets like PEPE. Computationally, PRIME continusely exhibits efficiency regardless of trade size. It reduces execution time by 86.9\% for 1 WETH USDT swaps and by 96.7\% for large-scale LINK trades relative to SOR. Therefore, we conclude that compared to the current Industry Standard, PRIME is capable of finding superior routing solutions with shorter execution time.

\rFour{
\spara{Ablation Studies} To validate the architectural choices of PRIME, we compare it againts two internal variants: PRIME-Flow and PRIME-Core. First, we compare PRIME with the relaxed PRIME-Flow variant. While the relaxed flow-based variant occasionally achieves higher execution prices by allowing overlapping paths (e.g., +80.32 bp vs. +77.93 bp for 1,000 WETH USDT). In several instances, PRIME significantly outperforms the flow-based method, notably in the 10 WETH USDT scenario where PRIME achieves +8.39 bp while PRIME (Flow) only reaches +2.36 bp. This is because flow-based methods often fail to converge stably to the global optimum compared to our parallel-path optimization. PRIME sacrifices a negligible average profit of less than 1 bp in cases where Flow performs better, but in exchange, it reduces computational overhead by an average of 93.1\%, with several cases exceeding a 95\% reduction in execution time(e.g., 39.2 ms vs. 873.5 ms for 1,000 WETH USDT).

Second, we compare PRIME with the PRIME-Core variant. The full PRIME algorithm consistently provides higher prices relative to the PRIME-Core variant, with the advantage being particularly pronounced in small-scale of 1 WETH trades. The shortcut Index allows PRIME to achieve +13.11 bp for USDT and +12.77 bp for WBTC, whereas the PRIME-Core variant only secures +9.78 bp and +7.90 bp, respectively. In summary, according to our experimental results, we sacrifice a marginal amount of execution time (less than 10 ms) to secure price improvements, validating the Shortcut Index as a cost-effective component for enhanced liquidity discovery.

}

\subsection{Stability Across Market Conditions}
To assess its stability across different market conditions, We sampled market states at 10-minute intervals over three periods introduced in Section \ref{subsec: experiment setting}. This process yielded a total of 3,024 market snapshots. We report results for the WETH-USDT pair, as it effectively represents performance trends observed across other major assets.

\spara{Price Comparison} 
Figure~\ref{subfig:bp_gain_robustness} demonstrates that PRIME consistently outperforms SOR, with the most significant advantages appearing at the extremes of trade size (1,000 and 1 WETH). For large 1,000 WETH trades, PRIME achieves gains around 10 bps, particularly in bullish conditions. This confirms the algorithm's ability to mitigate significant price slippage through optimal multi-path allocation. Conversely, for small 1 WETH swaps, PRIME also delivers outsized returns, occasionally spiking above 10 bps during stable periods. This performance is attributed to the \textbf{Shortcut Index}, which identifies paths through "long-tail" pools that industrial heuristics aggressively prune. Medium-sized trades (10 and 100 WETH) show more modest but positive gains.


\begin{table}[!tbp]
    \centering
    \scriptsize
    \caption{Comparison of PRIME and the adjusted convex solution across varying graph sizes. The table quantifies PRIME's performance advantage in basis points (bp) and presents the raw numerical imbalances from the convex solver that require post-processing adjustment. }
    \label{tab:convex_results}
    \begin{tabular}{@{}lcccc@{}}
    \toprule
    \multirow{2}{*}{\makecell{Graph Size \\ (\# Pools)}} & \multicolumn{2}{c}{Output (USDT)} & \multirow{2}{*}{\makecell{PRIME's \\ Advantage (bp)}} & \multirow{2}{*}{\makecell{Solver's Raw \\ Imbalance*}} \\
    \cmidrule(lr){2-3} 
    & PRIME & Adjusted Convex & & \\ 
    \midrule
    
    
    \makecell{3 Tokens \\ (6 Pools)} & 2.53253 M & 2.53257 M & -0.16 & \makecell[l]{USDC: +22.4579} \\ 
    \midrule
    
    \makecell{15 Tokens \\ (60 Pools)} & 2.53253 M & 2.53320 M & -2.64 & \makecell[l]{WBTC: -0.0064 \\ AAVE: +0.1202 \\ stETH: +0.0046 \\ and 7 other assets} \\ 
    \midrule
    
    \makecell{25 Tokens \\ (117 Pools)} & 2.53253 M & 2.53238 M & +0.59 & \makecell[l]{WBTC: -0.0627 \\ YFI: -0.0400 \\ stETH: -0.0111 \\ and 14 other assets} \\ 
    \bottomrule
    \end{tabular}
    \end{table}

\rThree{\spara{Time Comparison} Figure~\ref{subfig:time_reduction_robustness} highlights PRIME's computational efficiency, maintaining a substantial time reduction (generally 85--95\%) across all periods. Notably, the 1 WETH category exhibits the highest variability rather than consistent speed dominance. While it achieves peak reductions in bearish markets, the time reduction fluctuates significantly in active (Stable/Bullish) markets, occasionally dropping to around 83\%. This variance occurs because active markets generate a denser set of candidate shortcuts; evaluating these additional high-quality possibilities slightly increases computation time compared to stable market states.}

\subsection{Comparison with Theoretical Optimum}
\label{subsec:convex}

Beyond industry benchmarks, we evaluate PRIME against the theoretically optimal solution provided by convex optimization~\cite{diamandis2023efficient,angeris2022optimal}. While theoretically elegant, this approach is sensitive to the numerical precision required for real-world routing. We designed a targeted experiment to explicitly reveal the practical limitations of this algorithmic approach. This experiment is limited to Uniswap V2 pools, as the solver cannot handle Uniswap V3's concentrated liquidity mechanism. We constructed a series of routing problems of increasing scale, from a small 3-token graph (6 pools) to a large 25-token graph (117 pools), each simulating a 1,000 WETH for USDT swap.
\begin{figure}[!tbp]
    \centering
    \includegraphics[width=0.9\linewidth]{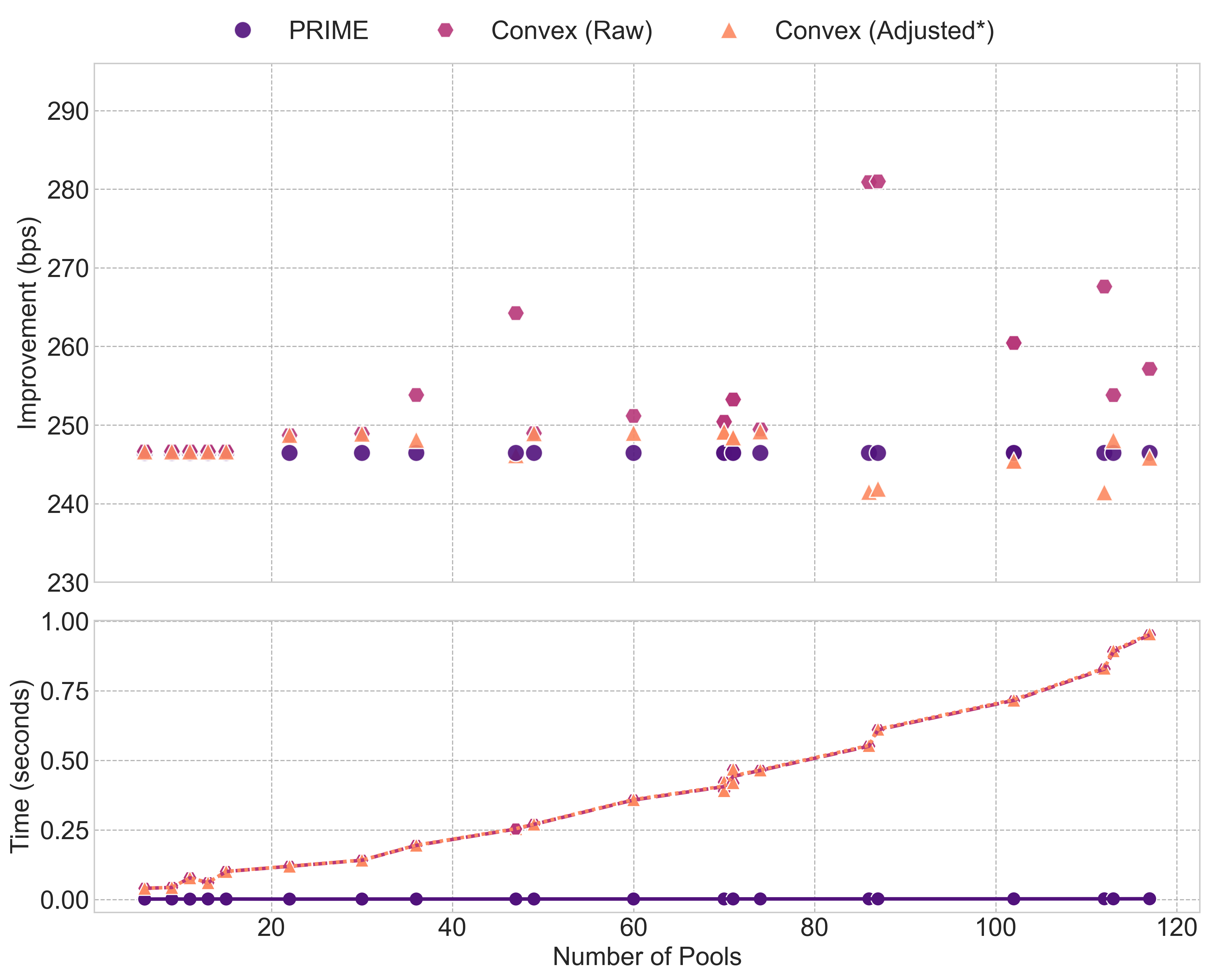}
    \caption{Scalability Comparison}
    \label{fig:convex comparison}
    	\vspace{-3mm}
\end{figure}

\spara{Price Comparison} As shown in Table~\ref{tab:convex_results}, raw convex outputs are practically inexecutable due to numerical faults—small residual balances. For instance, in the 25-token graph, the solution leaves deficits in WBTC and YFI while creating surpluses in stETH and 14 other assets. This instability stems from the discrepancy between the the Ethereum Virtual Machine (EVM~\cite{wood2014ethereum})'s 256-bit fixed-point arithmetic and the solver's 64-bit float representation, which loses up to six digits of precision during conversion. To enable a valid comparison, we use our path discovery algorithm to post-process the raw convex output into a feasible solution. The top panel of Figure \ref{fig:convex comparison} compares PRIME's output against both the raw and the adjusted outputs. While raw theoretical output appears slightly higher, applying a corrective heuristic to ensure feasibility significantly reduces its yield.Ultimately, the adjusted convex output offers a negligible advantage of at most 3 bps, and PRIME even secures superior results in more complex 117-pool scenarios.

\spara{Time Comparison}.
The time comparison in Figure \ref{fig:convex comparison} highlights a critical difference in scalability. PRIME exhibits near-constant time performance (Purple line), with its computation time remaining negligible and irrespective of the graph size. In contrast, a convex solver's computation time scales linearly with the number of pools (Orange line). As can be seen, when there exists $20$ pools, convex method spends nearly $0.1$ seconds for computation. However, when there are $100$ pools in the graph, the computation time climbs to nearly $1$ second. This scaling behavior makes the convex approach significantly less practical than PRIME's constant-time alternative—especially considering this is on a simplified dataset where the convex solver is even viable. In a full, real-world scenario, it would be computationally intractable.

\eat{

Beyond industry benchmarks, we evaluate PRIME against the theoretically optimal solution provided by convex optimization~\cite{diamandis2023efficient,angeris2022optimal}. While theoretically elegant, this approach is known to be sensitive to the scale and numerical precision required for real-world problems~\cite{github_issue_29}. We designed a targeted experiment to explicitly reveal the practical limitations of this algorithmic approach. This experiment is limited to Uniswap V2 pools, as the solver cannot handle Uniswap V3's concentrated liquidity mechanism. We constructed a series of routing problems of increasing scale, from a small 3-token graph (6 pools) to a large 25-token graph (117 pools), each simulating a 1,000 WETH for USDT swap.

\begin{figure}[!htbp]
    \centering
    \includegraphics[width=0.9\linewidth]{images/combined_performance_plot.png}
    \caption{Scalability Comparison}
    \label{fig:convex comparison}
    	\vspace{-3mm}
\end{figure}

\spara{Instability of Solution}
As shown in Table~\ref{tab:convex_results}, the raw output from the convex solver is not directly executable due to numerical faults. These faults manifest as small residual balances—deficits or surpluses—in intermediate tokens. For instance, in the 25-token graph, the solution leaves deficits in WBTC and YFI while creating surpluses in stETH and 14 other assets. These faults stem from a fundamental discrepancy in numerical representation: the Ethereum Virtual Machine (EVM)~\cite{wood2014ethereum} uses 256-bit fixed-point arithmetic (uint256), whereas convex solvers operate with 64-bit float numbers, which can only represent 15-17 significant decimal digits. However, on-chain token quantities frequently exceed this limit. For instance, a WETH reserve with 18 decimals has 22 significant digits. When this value is loaded into a float64, it is truncated to conform to the 15-17 digit limit, losing the last six digits of precision. The necessary conversion introduces unavoidable precision losses, forcing the solver to optimize against slightly inaccurate constraints. To render the solution executable, we designed a post-processing heuristic that corrects these imbalances. It adjusts trade volumes to resolve deficits and routes surpluses back to the target token via our path discovery algorithm (Section~\ref{subsec: Path Discovery}).

\spara{Price Comparison}
The top panel of Figure \ref{fig:convex comparison} compares PRIME's output against both the raw and the adjusted outputs of the convex solver. While the solver's raw theoretical output initially appears superior to PRIME's, it is practically infeasible due to token imbalances caused by numerical precision issues. After applying our corrective heuristic to produce a feasible result, this initially high output drops significantly.
The comparison of feasible results is telling: the adjusted convex output offers a negligible advantage, outperforming PRIME by at most $3$ bps. In more complex scenarios, such as the $117$-pool case, PRIME's output is actually superior. This shows that once the solution feasibility is adjusted, the theoretical advantage of the convex approach becomes minimal.

\spara{Time Comparison}
The time comparison in Figure \ref{fig:convex comparison} highlights a critical difference in scalability. PRIME exhibits near-constant time performance (Purple line), with its computation time remaining negligible and irrespective of the graph size. In contrast, a convex solver's computation time scales linearly with the number of pools (Orange line). As can be seen, when there exists $20$ pools, convex method spends nearly $0.1$ seconds for computation. However, when there are $100$ pools in the graph, the computation time climbs to nearly $1$ second. This scaling behavior makes the convex approach significantly less practical than PRIME's constant-time alternative—especially considering this is on a simplified dataset where the convex solver is even viable. In a full, real-world scenario, it would be computationally intractable.

}
\begin{figure*}[htbp]
  \centering
  \begin{subfigure}[b]{0.3\linewidth}
    \centering
    \includegraphics[width=\textwidth]{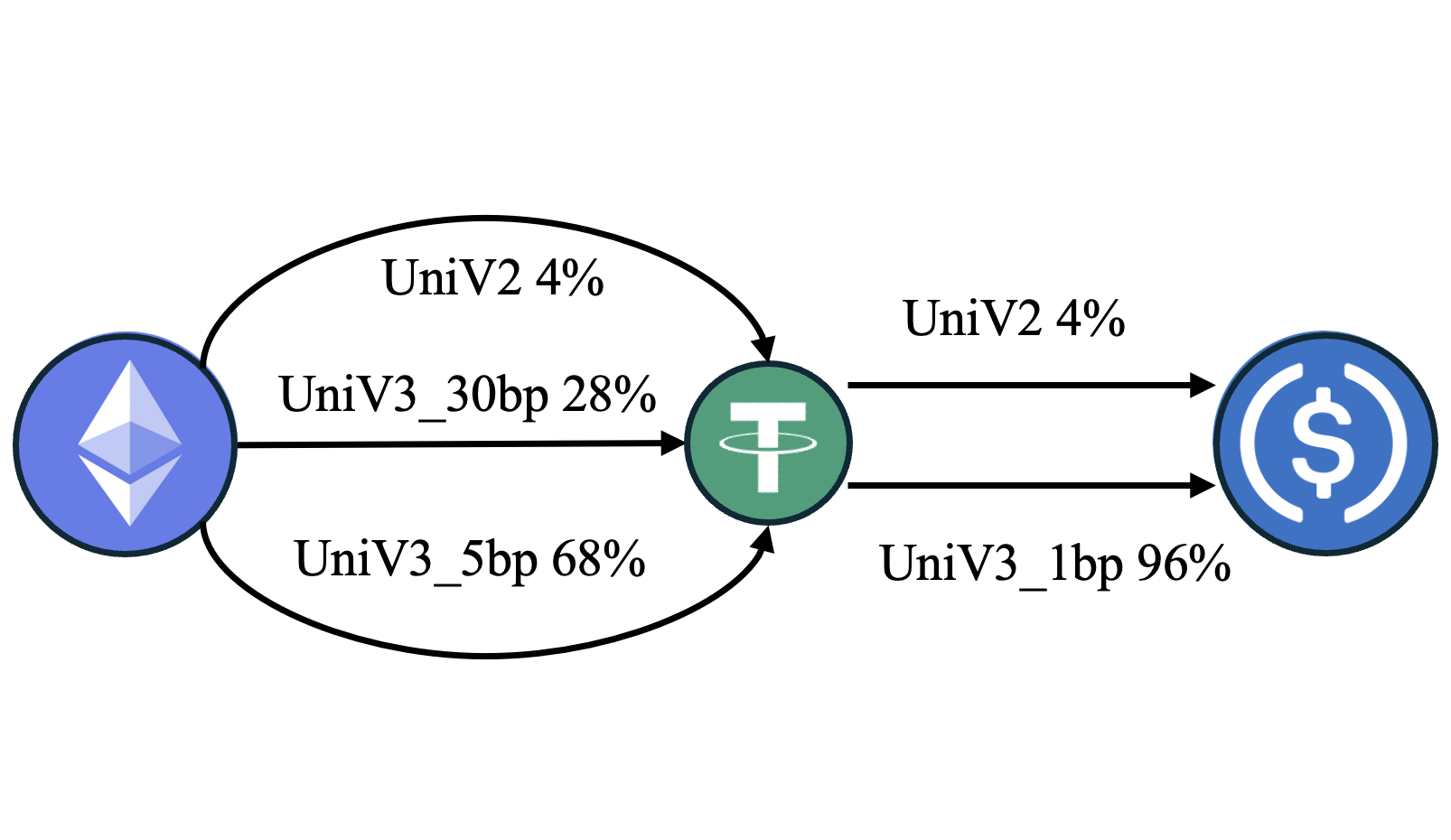}
    \caption{Single Multi-Edge Path (DODO \cite{dodo2020next})}
    \label{fig:SCP}
  \end{subfigure}
  \hfill
  \begin{subfigure}[b]{0.34\linewidth}
    \centering
    \includegraphics[width=\textwidth]{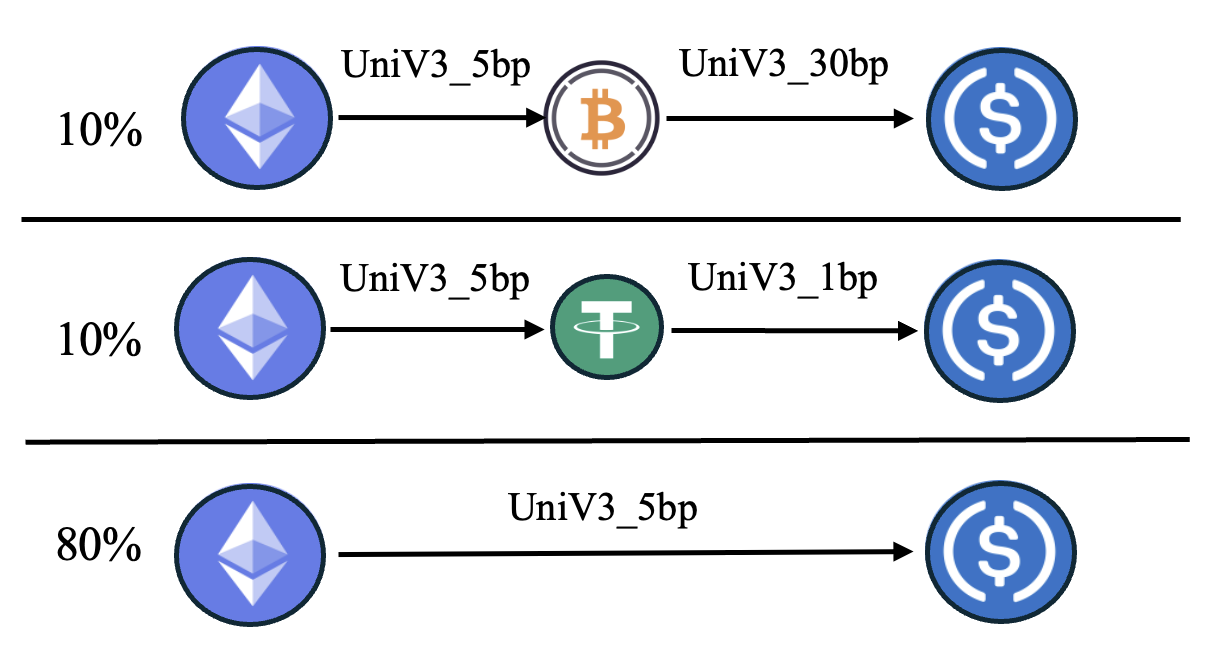}
    \caption{Multiple Single-Edge Paths (Uniswap \cite{uniswap_routing})}
    \label{fig:MSP}
  \end{subfigure}
  \hfill
  \begin{subfigure}[b]{0.34\linewidth}
    \centering
    \includegraphics[width=\textwidth]{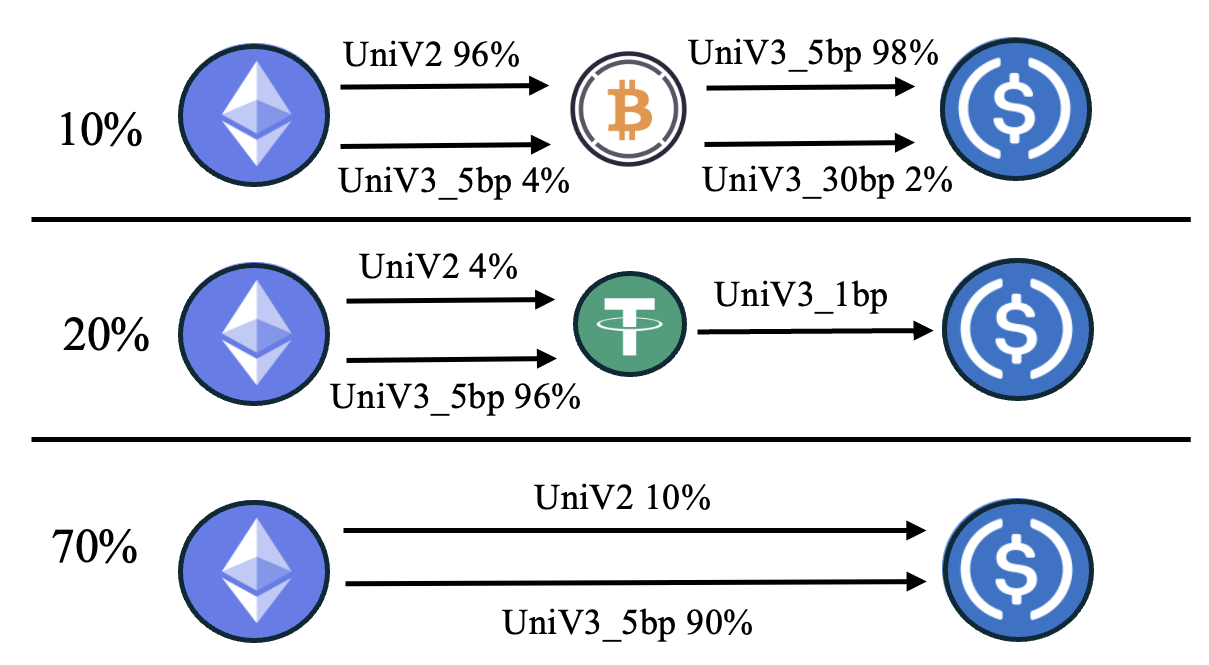}
    \caption{Multiple Multi-Edge Paths (1inch \cite{1inch_fusion})}
    \label{fig:MCP}
  \end{subfigure}

  \caption{
    Token swap execution patterns in production protocols. Uniswap V3 offers multiple pools for the same token pair, distinguished by fee tiers specified in the label (e.g., `UniV3\_5bp'). The standard 30bp fee for V2 pools is omitted for clarity. Percentages denote flow allocation.
  }

  \label{fig:swap-patterns}
\end{figure*}

\section{Discussion}
\label{sec:Discussion}

\spara{Practicality of Modeling} A key advantage of our model is its direct alignment with the structure of on-chain swap transactions, ensuring that our algorithmic solutions are practically executable. This alignment is rooted in \textit{Swap Path} introduced in Definition~\ref{def: path}, which is expressive enough to represent the diverse routing strategies employed by major DeFi protocols, as illustrated in Figure \ref{fig:swap-patterns}. For instance, protocols like DODO~\cite{dodo2020next} route the entire input into a single multi-edge path. As shown in Figure~\ref{fig:SCP} the hop between WETH and USDT utilized three parallel edges, while the subsequent hop between USDT and USDC utilized two parallel edges. In contrast, Uniswap split the swap amount across three distinct, disjoint paths as Figure~\ref{fig:MSP}. Each of these paths was structurally single-edge, meaning every hop within each path utilized only a single edge. Besides, other specialized aggregation protocol like 1inch~\cite{1inch_swap} distribute the input across multiple disjoint path, where any of these path leverages parallel edges for intermediate hops. 

In summary, these representative execution patterns are all encompassed by our generalized \textit{Swap Path} definition. Because our model accurately captures the problem structure, these protocols can directly integrate the PRIME algorithm to solve their routing challenges. This affirms the immediate practical value of our approach for building and optimizing token swap transactions in DeFi.

\spara{Extensibility to Heterogeneous Markets} In this paper, we focus on the token graph routing problem on Ethereum blockchain. However, our model is designed for broad applicability and can readily generalize to more markets, including multi-chain DEXs and centralized exchanges (CEXs)~\cite{hagele2024centralized}. For instance, a CEX's limit order book can be approximated as a new edge with a concave swap function, allowing its integration into our framework. The swap function is determined by the depth and price range of the order book. Since all swap functions are concave, the overall aggregated market model preserves concavity. Our PRIME algorithm remains effective on this aggregated market by discovering paths and using our ASGM to optimize the allocation of input. Therefore, our model is applicable to more markets.

\eat{We can model these liquidity source as a new edge within our token swap graph. Specifically, for liquidity from a CEX, the limit order book can be modeled as a series of discrete price levels, each offering a fixed exchange rate up to a certain quantity. Although the swap function of an order book is not continuously concave in the strict mathematical sense, a deep and liquid order book can be effectively approximated as exhibiting diminishing returns. Therefore, the order book can be modeled within our framework as a special case of edge. 
}

\vspace{2mm}
\section{Related Work}
\label{sec:relate}
\meta{
\spara{Graph Query Processing} Shortest path query processing is a fundamental problem in database research, with recent focus shifting from static graphs to constrained and dynamic environments. State-of-the-art solutions for Constrained Shortest Path (CSP) queries typically leverage indexing schemes based on tree decomposition or 2-hop labeling to prune the search space. For instance, Wang et al.~\cite{wang2024pcsp} proposed PCSP, which utilizes tree decomposition to efficiently answer label-constrained queries by pre-computing partial paths. Similarly, Zhang et al.~\cite{zhang2021efficient} introduced the LSD-Index, employing dominance pruning to handle complex topological constraints. For multi-criteria optimization, Liu et al.~\cite{liu2024approximate} developed approximate skyline indices to identify Pareto-optimal paths with theoretical guarantees. However, these techniques fundamentally assume weight additivity. In contrast, Token Graphs edge weights are concave swap function. This dependency violates the additive substructure required by existing indices. Furthermore, Zhang et al.~\cite{zhang2021experimental} demonstrated that heavy indexing schemes degrade significantly in high-churn environments, rendering them unsuitable for block-rate updates in DeFi.

\spara{Generalized Network Flow} Generalized Network Flow (GNF) extends standard flow problems by incorporating gain factors on edges to model loss or amplification, transforming flow conservation into a linear programming challenge~\cite{dantzig2016linear}. While early combinatorial approaches Truemper~\cite{truemper1973optimal} utilized logarithmic transformations to map multiplicative gains to additive costs, recent theoretical breakthroughs by V\'egh~\cite{vegh2014strongly} achieved a strongly polynomial time algorithm using continuous scaling techniques. In the context of non-linear extensions, Shigeno~\cite{shigeno2006maximum} addressed minimum cost flows with concave gains, and Angeris et al.~\cite{angeris2021analysis,angeris2022constant} recently formulated routing in Constant Function Market Makers as a convex optimization problem. However, these generalized flow algorithms predominantly rely on computationally intensive solvers, which are infeasible for real-time transaction routing on massive graphs.

}

\eat{
\spara{Blockchain Transactions Optimization} The growing adoption of decentralized exchanges (DEXs) and automated market makers (AMMs) has intensified the research on blockchain transaction. For instance, Daian et al.~\cite{daian2020flash} first identified MEV, revealing how attackers exploit transaction reordering within blocks for profit. Subsequent work by Zhou et al.~\cite{zhou2021high,qin2022quantifying} quantified the prevalence of this phenomenon on the Ethereum blockchain. Wang et al.~\cite{wang2024private} analyzed how private transactions can generate additional revenue for block builders. Parallel efforts target transaction cost optimization. For instance, Kong et al.~\cite{kong2022characterizing} identified and characterized gas-inefficient patterns in smart contracts, while Albert et al.~\cite{albert2020synthesis,albert2020gasol} developed static analysis tools to automatically detect and suggest fixes for such costly patterns. Our work improves existing research on the token graph routing and is orthogonal to other optimization methods. These optimization methods could be applied together to provide a better experience for users using blockchains.

}

\eat{
\spara{Optimizing Blockchain Transactions} 
The growing adoption of decentralized exchanges (DEXs) and automated market makers (AMMs) has intensified the need for effective token swap algorithms. \eat{A primary goal is token swap routing, which focuses on maximizing the amount of the target token received for a given input amount. Prior theoretical work has often modeled this as a convex optimization or network flow problem.} For instance, Angeris et al. ~\cite{angeris2022constant,angeris2021analysis} first proposed convex optimization method to analyze routing and arbitrage problems in token swap graph. Diamandis et al.~\cite{diamandis2023efficient,diamandis2024convex} further proposed an efficient decomposition algorithm and analyzed a more general convex network flow problem. Besides, Zhang et al.~\cite{zhang2025line,zhang2025measuring} develop a line graph-based algorithm for optimal single-path routing and propose a new metric to evaluate the efficiency of their routing algorithm.

Beyond the routing problem, other research focuses on optimization in price protection and transaction cost. \cite{daian2020flash} first identified MEV, revealing how attackers exploit transaction reordering within blocks for profit. Subsequent work by \cite{zhou2021high,qin2022quantifying} quantified the prevalence of this phenomenon on the Ethereum blockchain. \cite{wang2024private} analyzed how private transactions can generate additional revenue for block builders. Parallel efforts target transaction cost optimization. For instance, \cite{kong2022characterizing} identified and characterized gas-inefficient patterns in smart contracts, while Albert et al.~\cite{albert2020synthesis,albert2020gasol} developed static analysis tools to automatically detect and suggest fixes for such costly patterns. Our work improves existing research on the token graph routing and is orthogonal to other optimization methods. These optimization methods could be applied together to provide a better experience for users using blockchains.

\spara{Interconnecting Blockchain Ecosystems} The rapid growth of DeFi has fueled both the demand for cross-chain interoperability and research into on-chain economics. In response, several studies have explored the mechanisms enabling such interoperability. For instance, De et al.~\cite{de2150crocrpc} proposed CroCRPC, a framework for cross-chain remote procedure calls, enabling dApps to interact across chains. Similarly, Tao et al.~\cite{tao2023sharding} investigate sharding across heterogeneous blockchains, outlining the complex topological environments where routing must operate. Wang et al.~\cite{wang2024v2fs} proposed V2FS, an infrastructure for verifiable state queries in multi-chain environments. On the economic side, Demosthenous et al.~\cite{demosthenous2150chain} analyzed the use of on-chain data for cryptocurrency market forecasting, demonstrating how transaction data reflects market dynamics pertinent to DeFi. Huang et al.~\cite{huang2021rich,huang2025last} analyze the fairness of blockchain economic incentives, which can impact network fairness and  are central to user experience in DeFi. Our routing algorithm can be combined with the above interoperability frameworks and incentive analyses to enable a interconnected multi-chain DeFi market. 
}

\section{Conclusion}
\label{sec:conclude}
We study the token graph routing problem in the decentralized blockchain finance market, focusing on maximizing user output for a given input amount. We introduce a practical modeling framework grounded in Swap Paths, aligning closely with on-chain execution realities. Our proposed algorithm, PRIME, utilizes a two-stage approach: efficient path discovery via a Breadth-First Search Pruning method and optimal allocation using our novel Adaptive Sign Gradient Method (ASGM).  Our theoretical analysis shows that ASGM converges to an optimal allocation with a guaranteed linear rate. Both theoretical proofs and extensive experiments using real-world Ethereum data are carried out to demonstrate PRIME’s effectiveness. Our results show that PRIME consistently outperforms the widely-used Uniswap Smart Order Router benchmark, achieving better exchange prices, especially for larger trades and significantly faster execution times across diverse market periods.

\section*{Acknowledgment}
Jing Tang's work is partially supported by the National Natural Science Foundation of China (NSFC) under Grant No.\ 62402410 and by Guangdong Provincial Project (No.\ 2023QN10X025).

\bibliographystyle{IEEEtran}
\bibliography{sample}

\appendices

\section{Convergence Rate Analysis}
\label{app:convergence}

This appendix provides a detailed analysis of the convergence rate of the Adaptive Sign Gradient Method (ASGM) for the Token Allocation Problem. The main text states Theorem~\ref{thm:convergence} and gives a brief proof sketch; here we expand the derivation for the technical report.

\subsection{Problem Setup and Notation}

Recall the allocation problem:
\begin{equation}
\begin{split}
\max_{W} \, J(W) &= \sum_{p \in \mathcal{P}} f_p(W_p \cdot x), \\
\text{s.t.} \quad &\sum_{p \in \mathcal{P}} W_p = 1,\quad W_p \geq 0.
\end{split}
\end{equation}
The objective $J$ is strongly concave over the simplex $\Omega$. We denote the optimal allocation by $W^*$ and the optimality gap at iteration $t$ by $D(W_t) = J(W^*) - J(W_t)$.

\subsection{Proof of Theorem~\ref{thm:convergence}}
\label{app:pgm_convergence}

To derive the ASGM convergence properties, we analyze how the optimality gap $\delta_t = J^* - J(W_t)$ decreases during every iteration. Here, $J^*$ is the optimal value of the objective function $J(W)$, $W^*$ is the unique optimal allocation in the constraint set $\Omega = \{W | \sum_p W_p = 1, W_p \geq 0\}$, and $W_t$ is the allocation at iteration $t$.

We first establish several key lemmas that will be used in the main proof.

\begin{lemma}[Strong Concavity of Path Swap Functions]
\label{lem:path_strong_concavity}
Given that each edge swap function $f_e$ is strictly and strongly concave on a bounded domain, the path swap function $f_p$, defined as the composition of hop swap functions, is also strongly concave.
\end{lemma}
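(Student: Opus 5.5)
We prove the statement by induction on the number of hops, working on a compact input interval $[0,\bar x]$ (in practice $\bar x = x$, the total input). Two facts are needed at each hop: a uniform strong-concavity constant, and uniform positive lower and upper bounds on the derivative. Concretely, for each edge we take $\mu_e>0$ with $f_e''\le -\mu_e$ on the relevant bounded domain; this is the strong concavity guaranteed after Definition~\ref{def: swap function}. We also assume each $f_e$ is $C^2$, as it is for the CPMM example. Since $f_e$ is strictly increasing and concave, $f_e'$ is decreasing. On a compact domain this gives $0<\ell_e:=f_e'(\bar x)\le f_e'(y)\le f_e'(0)=:L_e$. We also assume $\ell_e>0$, i.e.\ the marginal price stays strictly positive on the bounded domain.

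\textbf{Step 1 (a single hop).} Fix the edge weights $w_e$ on hop $E_i$. Then $f_{E_i}(y)=\sum_{e\in E_i} f_e(w_e y)$ satisfies $f_{E_i}''(y)=\sum_e w_e^2 f_e''(w_e y)\le -\sum_e w_e^2\mu_e$. Because $\sum_e w_e=1$, Cauchy--Schwarz gives $\sum_e w_e^2\ge 1/|E_i|$. So $f_{E_i}$ is strongly concave with modulus at least $\min_e\mu_e/|E_i|$. Its derivative $\sum_e w_e f_e'(w_e y)$ also lies in $[\min_e \ell_e,\max_e L_e]$. If the weights are instead optimized, as in the ASGM inner loop, the hop function is a sup-convolution of strongly concave functions. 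In that case I would rather argue via the equalized-marginal-price characterization (Lemma~\ref{lem:optimal_condition}) that its second derivative equals $\big(\sum_e 1/f_e''\big)^{-1}$, which is again bounded away from $0$.

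\textbf{Step 2 (composition).} Let $g$ be the composition of the first $i-1$ hops, assumed strongly concave with modulus $\mu_g$ and with $0<\ell_g\le g'\le L_g$. Let $h=f_{E_i}$. By the chain rule,
\begin{equation*}
(h\circ g)''(y)=h''(g(y))\,g'(y)^2+h'(g(y))\,g''(y)\le -\mu_h\ell_g^2-\ell_h\mu_g<0 .
\end{equation*}
Both terms are nonpositive: $h$ is concave and $g'>0$, and $h$ is increasing and $g$ is concave. Hence the modulus of $h\circ g$ is at least $\mu_h\ell_g^2+\ell_h\mu_g$, and its derivative $h'(g)g'$ lies in $[\ell_h\ell_g, L_hL_g]$. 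This propagates the invariant to the next hop. Note that the range of $g$ on $[0,\bar x]$ is the bounded interval $[0,g(\bar x)]$, so the edge bounds at hop $i$ are evaluated on a compact set. Induction over $k$ hops then gives a positive modulus for $f_p$ on $[0,\bar x]$. The non-smooth case can be handled with the standard inequality form of strong concavity, using that concave nondecreasing outer functions preserve the quadratic deficit scaled by $\ell_h$.

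\textbf{Main obstacle.} The difficulty is ensuring the lower bounds $\ell_e>0$ and $\mu_e>0$ hold uniformly on the actual domains reached after composition. Strong concavity of the outer function contributes only through $g'^2$, and concavity of the inner function contributes only through $h'$. If some marginal price were allowed to approach $0$, the modulus could degenerate. I will handle this by explicitly restricting to the bounded domain $[0,\bar x]$ and to pools with finite reserves. For CPMM pools this gives $f_e'(\bar x)=R_{in}R_{out}/(R_{in}+\bar x)^2>0$, and Uniswap V3 is handled piecewise with the same bounds on each tick range.
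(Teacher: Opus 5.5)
Your proposal is correct and follows the same route as the paper's proof: strong concavity of each hop function from its edge functions, then propagation of strong concavity through the composition of hops. It is more careful than the paper's sketch, which only asserts that ``strong concavity is maintained over the finite trade range.'' You make two points explicit: the weight-independent hop modulus $\min_e\mu_e/|E_i|$ (via $\sum_e w_e^2\ge 1/|E_i|$), and the chain-rule bound $-(h\circ g)''\ge\mu_h\ell_g^2+\ell_h\mu_g$, which shows the argument relies on marginal prices being bounded below by some $\ell>0$ on the bounded domain. That lower bound is automatic here, because a concave, strictly increasing function on $\mathbb{R}_{\ge 0}$ has strictly positive one-sided derivatives everywhere.
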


\begin{proof}
By Definition~\ref{def: swap function}, each $f_e$ is strictly concave and monotonically increasing.

\textbf{Weighted Sums:} Each hop function is a weighted sum of strongly concave functions, which preserves strong concavity.

\textbf{Composition:} The path function $f_p$ is the composition of these hop functions. Since the composition of monotonically increasing concave functions is concave, and strong concavity is maintained over the finite trade range, the overall objective function $J(W) = \sum_{p \in \mathcal{P}} f_p(W_p \cdot x)$ is strongly concave.
\end{proof}

\begin{lemma}[Lipschitz Gradient Property]
\label{lem:l_property}
Given a concave objective function $J$ with $L$-Lipschitz gradients, the following inequality holds:
\begin{equation}
\begin{split}
J(W_t + \Delta \cdot \text{SIGN}) &\geq J(W_t) + \Delta \cdot \nabla J(W_t)^\top \text{SIGN} \\
&\quad - \frac{L}{2} \Delta^2 \|\text{SIGN}\|^2,
\end{split}
\end{equation}
where $\Delta$ denotes the step size and $\text{SIGN}$ represents the direction vector identifying the rebalancing direction between the most and least efficient paths.
\end{lemma}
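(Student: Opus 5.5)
The inequality of Lemma~\ref{lem:l_property} is the concave counterpart of the standard descent lemma, so I will derive it from the fundamental theorem of calculus along the segment from $W_t$ to $W_t + \Delta\cdot\text{SIGN}$.

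\textbf{Step 1: integral representation.} Set $d = \text{SIGN}$ and $\phi(s) = J(W_t + s d)$ for $s \in [0,\Delta]$. Since each $f_p$ is continuously differentiable on the bounded trade range, $\phi$ is differentiable with $\phi'(s) = \nabla J(W_t + s d)^\top d$. This gives
\begin{equation}
J(W_t + \Delta d) - J(W_t) - \Delta\,\nabla J(W_t)^\top d = \int_0^{\Delta} \bigl(\nabla J(W_t + s d) - \nabla J(W_t)\bigr)^\top d \, ds .
\end{equation}

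\textbf{Step 2: Lipschitz bound on the integrand.} Apply Cauchy--Schwarz and then the $L$-Lipschitz property of $\nabla J$:
\begin{equation}
\bigl(\nabla J(W_t + s d) - \nabla J(W_t)\bigr)^\top d \;\geq\; -\|\nabla J(W_t + s d) - \nabla J(W_t)\|\,\|d\| \;\geq\; -L s \|d\|^2 .
\end{equation}
Integrating over $s \in [0,\Delta]$ yields the lower bound $-\tfrac{L}{2}\Delta^2\|d\|^2$. Rearranging gives the claimed inequality.

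\textbf{Step 3: specializing to ASGM.} Concavity is not used in the bound itself; it only guarantees that the step makes progress. Because $d$ has exactly two nonzero entries, $\|d\|^2 = 2$. The bound therefore reads $J(W_t + \Delta d) \ge J(W_t) + \Delta x (g_{p^+} - g_{p^-}) - L\Delta^2$. This is the form I would later combine with the Armijo test to show that backtracking stops at some $\Delta \ge \beta(1-\alpha)x(g_{p^+}-g_{p^-})/L$.

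\textbf{Main obstacle.} The delicate point is justifying that $L$ is finite. This needs each $f_p''$ to be bounded on the finite trade range, using $f_e'' > -\infty$ for CPMM-type functions (for example, $|f''| \le 2R_{out}/R_{in}^2$ for Uniswap V2) and the chain rule through the compositions. The weights must also stay inside $[0,1]$ so that the segment remains in the domain. I would handle this by capping $\Delta$ by the current allocation of the path losing weight.
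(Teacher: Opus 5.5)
Your proof is correct and follows the same route as the paper, which simply invokes the standard Lipschitz-gradient (descent-lemma) inequality for smooth functions and justifies a finite $L$ by bounded second derivatives on the bounded trade range; you have written out the fundamental-theorem-of-calculus and Cauchy--Schwarz derivation of that cited fact. Your caveat that the segment from $W_t$ to $W_t+\Delta\cdot\text{SIGN}$ must stay in the domain (e.g.\ by capping $\Delta$ at the current weight of the path being reduced) is a legitimate point that the paper leaves implicit.
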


\begin{proof}
This lemma is a direct consequence of the $L$-Lipschitz gradient property for smooth functions~\cite{boyd2004convex}. The Lipschitz constant $L$ exists because the swap functions are defined on bounded domains, ensuring bounded second derivatives.
\end{proof}

\begin{lemma}[Armijo Condition]
\label{lem:armijo}
The step size $\Delta$ satisfies the Armijo condition if:
\begin{equation}
J(W_t + \Delta \cdot \text{SIGN}) \geq J(W_t) + \alpha \Delta \cdot \nabla J(W_t)^\top \text{SIGN},
\end{equation}
where $\alpha \in (0,1)$ is a constant.
\end{lemma}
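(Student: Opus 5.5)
Although the statement reads as a definition, the content worth proving is that the backtracking loop in Algorithm~\ref{alg:asg} always terminates with an admissible step satisfying the displayed inequality. It should also yield an explicit lower bound on the accepted $\Delta$, which the main convergence argument needs.

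The plan is to combine Lemma~\ref{lem:l_property} with the specific structure of $\mathrm{SIGN}$. Since $\mathrm{SIGN}$ has exactly two nonzero entries, $+1$ on $p^+$ and $-1$ on $p^-$, we have $\|\mathrm{SIGN}\|^2=2$ and $\nabla J(W_t)^\top \mathrm{SIGN} = x(g_{p^+}-g_{p^-}) =: G_t \ge 0$. Here $G_t>0$ unless $W_t$ is already optimal, by Lemma~\ref{lem:optimal_condition}. Lemma~\ref{lem:l_property} then gives
\[
J(W_t+\Delta\,\mathrm{SIGN}) \ge J(W_t) + \Delta G_t - L\Delta^2 .
\]
The right-hand side is at least $J(W_t)+\alpha\Delta G_t$ whenever $(1-\alpha)G_t \ge L\Delta$, i.e. for every $\Delta \le \bar\Delta_t := (1-\alpha)G_t/L$. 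Hence the Armijo inequality holds on the whole interval $(0,\bar\Delta_t]$. Because the loop multiplies $\Delta$ by $\beta\in(0,1)$ starting from $\Delta_0$, it stops after at most $\lceil \log_\beta(\bar\Delta_t/\Delta_0)\rceil$ reductions. The accepted step then satisfies $\Delta_t \ge \min\{\Delta_0,\ \beta(1-\alpha)G_t/L\}$, and the guaranteed increase is $J(W_{t+1})-J(W_t)\ge \alpha\Delta_t G_t$.

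The step I expect to be the main obstacle is feasibility. The move must keep $W_t+\Delta\,\mathrm{SIGN}\in\Omega$, i.e. $\Delta\le W_{t,p^-}$, and the Lipschitz bound only makes sense where $J$ is defined. I would handle this by capping the trial step at $\min\{\Delta_0, W_{t,p^-}\}$, which is the natural reading of backtracking restricted to the simplex. The sufficient-decrease argument above is unchanged, with $\bar\Delta_t$ replaced by $\min\{\bar\Delta_t, W_{t,p^-}\}$.

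This leaves a ``drop step'' case, where the cap $W_{t,p^-}$ binds and path $p^-$ is emptied. I would treat it as in Lacoste-Julien and Jaggi~\cite{lacoste2015global}. Such steps still increase $J$ and never hurt. Moreover, they cannot occur more often than the steps that add mass to a path, so they only cost a constant factor in the linear rate that the main proof derives from $\alpha\Delta_t G_t$ together with strong concavity.
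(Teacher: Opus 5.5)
Your proposal is correct, but it proves more than the paper does. The paper's proof is a one-line observation: the backtracking loop of Algorithm~\ref{alg:asg} exits only once the inequality holds, so the accepted $\Delta$ satisfies it by construction. Termination of the loop and feasibility of $W_t+\Delta\,\mathrm{SIGN}$ are taken for granted.

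You instead use Lemma~\ref{lem:l_property} with $\|\mathrm{SIGN}\|^2=2$ to show that every $\Delta\le(1-\alpha)G_t/L$ is admissible. This gives termination and, through the ``previous trial was rejected'' argument, the bound $\Delta_t\ge\min\{\Delta_0,\beta(1-\alpha)G_t/L\}$. The payoff is a sound version of Lemma~\ref{lem:low_bound}. The paper's proof of that lemma applies Lemma~\ref{lem:l_property} with the inequality reversed, and even then the chain would only bound $\Delta$ from above. The factor $\beta$ you carry cannot be dropped in general.

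Three minor corrections:
\begin{enumerate}
\item The cap at $W_{t,p^-}$ only works if $p^-$ is taken as the minimizer of $g_p$ over the support $\{p: W_{t,p}>0\}$, i.e., the away vertex of pairwise Frank--Wolfe. With the paper's unrestricted $\arg\min$, a zero-weight $p^-$ gives a zero cap, and the iteration can stall at a non-optimal point.
\item As Algorithm~\ref{alg:asg} is literally written, the repeat--until loop shrinks $\Delta$ before the first test. The first entry of your minimum should therefore be $\beta\Delta_0$.
\item In the pairwise setting the truncated steps also include swap steps, where $p^-$ is emptied and $p^+$ is newly added. These leave the support size unchanged, so they are not bounded by the number of steps that add a path. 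Lacoste-Julien and Jaggi control them only by a combinatorial constant, so the rate stays linear but with a much worse constant.
\end{enumerate}
None of these affects the lemma itself.
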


\begin{proof}
This follows directly from the backtracking line search procedure in Algorithm~\ref{alg:asg}, which iteratively reduces $\Delta$ by a factor $\beta$ until the Armijo condition is satisfied.
\end{proof}

\begin{lemma}[Strong Concavity and Simplex Geometry Bound]
\label{lem:strong_concavity_bound}
For the $\mu$-strongly concave function $J$ optimized over the simplex domain $\Omega$, the rebalancing direction $\text{SIGN}$ satisfies the following linear and quadratic relations:
\begin{align}
\nabla J(W_t)^\top \text{SIGN} &\geq c \|\text{SIGN}\| \left( \nabla J(W_t)^\top \hat{e}_t \right), \\
(\nabla J(W_t)^\top \text{SIGN})^2 &\geq 2 \mu c^2 \|\text{SIGN}\|^2 \cdot \delta_t,
\end{align}
where $\delta_t = J^* - J(W_t)$ is the optimality gap, $\hat{e}_t = (W^* - W_t)/\|W^* - W_t\|$ is the normalized optimal direction, and $c = PWidth(\Omega) > 0$ is the constant pyramidal width of the simplex domain.
\end{lemma}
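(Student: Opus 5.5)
The plan is to follow the pairwise Frank--Wolfe analysis of Lacoste-Julien and Jaggi~\cite{lacoste2015global}, specialised to the simplex. On the simplex the pyramidal-width argument reduces to an elementary $\ell_1$/$\ell_2$ comparison. Write $g = \nabla J(W_t)$. By the chain rule its coordinates are $x\, f_p'(W_{t,p} x)$, the marginal prices scaled by $x$. In Algorithm~\ref{alg:asg}, $\text{SIGN} = e_{p^+} - e_{p^-}$, so $\|\text{SIGN}\| = \sqrt{2}$ and
\[
\nabla J(W_t)^\top \text{SIGN} = \max_p g_p - \min_p g_p .
\]
Since $\min_p g_p \le \min_{p \in S_t} g_p$, where $S_t = \{p : W_{t,p} > 0\}$ is the current support, this quantity is at least the pairwise gap $\max_p g_p - \min_{p\in S_t} g_p$. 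That is the gap appearing in the pairwise Frank--Wolfe theory. If needed, I will remark that ASGM should restrict $p^-$ to $S_t$ for feasibility; the bound only becomes easier otherwise.

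\textbf{Step 1 (linear bound).} Let $d = W^* - W_t$, so $\sum_p d_p = 0$. Split $d = d^+ - d^-$ into positive and negative parts. Because $W^* \ge 0$, the negative part $d^-$ is supported in $S_t$, and $\|d^+\|_1 = \|d^-\|_1 = \|d\|_1/2$. Then
\[
g^\top d \le \|d^+\|_1 \max_p g_p - \|d^-\|_1 \min_{p\in S_t} g_p = \tfrac{1}{2}\|d\|_1 \Big(\max_p g_p - \min_{p \in S_t} g_p\Big).
\]
Using $\|d\|_1 \le \sqrt{n}\,\|d\|_2$ with $n = |\mathcal{P}|$ and dividing by $\|d\|_2$ gives
\[
\nabla J(W_t)^\top \text{SIGN} \ge \tfrac{2}{\sqrt{n}}\, g^\top \hat e_t .
\]
Here $2/\sqrt{n}$ is exactly the pyramidal width of the $n$-simplex. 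Rewriting $\tfrac{2}{\sqrt n} = \tfrac{c}{\sqrt 2}\|\text{SIGN}\|$ yields the first inequality. The harmless factor $1/\sqrt2$ can be absorbed into the definition of $c$ (or $\|\text{SIGN}\|$ normalised), and I will state the normalisation explicitly. I also note that $g^\top \hat e_t \ge 0$, because concavity gives $0 \le \delta_t \le g^\top d$.

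\textbf{Step 2 (quadratic bound).} By $\mu$-strong concavity,
\[
J^* \le J(W_t) + g^\top d - \tfrac{\mu}{2}\|d\|^2 .
\]
Write $d = h\,\hat e_t$ with $h = \|d\|$ and maximise the right-hand side over $h \ge 0$. This gives
\[
\delta_t \le \max_{h\ge 0}\Big(h\, g^\top\hat e_t - \tfrac{\mu}{2}h^2\Big) = \frac{(g^\top \hat e_t)^2}{2\mu},
\]
that is, $(g^\top\hat e_t)^2 \ge 2\mu\,\delta_t$. Both sides of the Step 1 inequality are nonnegative, so I can square it and substitute. This gives
\[
(\nabla J^\top \text{SIGN})^2 \ge c^2\|\text{SIGN}\|^2 (g^\top\hat e_t)^2 \ge 2\mu c^2 \|\text{SIGN}\|^2\,\delta_t ,
\]
which is the second inequality.

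\textbf{Main obstacle.} The delicate step is Step 1: getting the geometric constant right and handling the support/feasibility issue for $p^-$. I will do it by the direct $d^\pm$ decomposition above rather than invoking the general pyramidal-width theorem, citing~\cite{lacoste2015global} only to identify the constant. The strong-concavity modulus $\mu$ is taken from Lemma~\ref{lem:path_strong_concavity}, after accounting for the factor $x^2$ introduced by the scaling $W_p x$.
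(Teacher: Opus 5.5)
Your Step~2 is the paper's argument exactly. The $\mu$-strong-concavity bound on $\delta_t$, maximised over $h=\|W^*-W_t\|$, gives $(\nabla J(W_t)^\top\hat e_t)^2\ge 2\mu\delta_t$, and squaring the first inequality finishes.

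The routes differ on the first inequality. The paper does not derive it: it identifies ASGM's step with a pairwise Frank--Wolfe step and quotes the pyramidal-width bound of Lacoste-Julien and Jaggi. You prove it directly on the simplex, using the split $d=d^+-d^-$ (with $d^-$ supported on $S_t$) and $\|d\|_1\le\sqrt n\,\|d\|_2$. The citation is shorter and would cover any polytope. Your argument is elementary and checkable. It also makes explicit that ASGM's unrestricted choice of $p^-$ can only help; pairwise Frank--Wolfe takes the away index from the active set, and the paper leaves this point implicit.

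More importantly, your computation shows the factor $1/\sqrt2$ is not a harmless normalisation. Both inequalities are homogeneous of the same degree in $\text{SIGN}$, so rescaling $\text{SIGN}$ changes nothing. With $c=PWidth(\Omega)$ and Euclidean norms, the stated first inequality is simply false:
\begin{itemize}
\item For $n=2$ one has $c=\sqrt2$ and $\|\text{SIGN}\|=\sqrt2$, and concavity forces $\hat e_t=\text{SIGN}/\sqrt2$ whenever $W_t\neq W^*$.
\item Hence $\nabla J(W_t)^\top\text{SIGN}=\sqrt2\,\nabla J(W_t)^\top\hat e_t$, strictly below the claimed $2\,\nabla J(W_t)^\top\hat e_t$.
\item For $J=-\tfrac{\mu}{2}\|W-W^*\|^2$ with $W^*$ in the relative interior, the same instance gives $(\nabla J(W_t)^\top\text{SIGN})^2=4\mu\delta_t<8\mu\delta_t$, so the second inequality fails too.
\end{itemize}
Lacoste-Julien and Jaggi's theorem reads $\langle\nabla J,s-v\rangle\ge PWidth\cdot\langle\nabla J,\hat e_t\rangle$, with no $\|s-v\|$ factor. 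That is precisely your bound. The paper's extra $\|\text{SIGN}\|$ overstates it by $\sqrt2=\mathrm{diam}(\Omega)$.

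So what you actually prove, correctly, is the lemma with $c=PWidth(\Omega)/\sqrt2$, i.e.\ pyramidal width over diameter. Say so explicitly rather than claiming the stated constant. The appendix's rate then becomes $\rho=2\alpha(1-\alpha)\mu\,PWidth(\Omega)^2/L$, half the paper's value but still linear.

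A small point: $2/\sqrt n$ equals the pyramidal width only for even $n$. For odd $n$ it is $2/\sqrt{n-1/n}$, which your argument recovers if you use $\sum_p d_p=0$ in the $\ell_1$--$\ell_2$ step. Your weaker constant is valid either way.
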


\begin{proof}
We begin by establishing the alignment between the algorithmic search direction and the optimal face of the domain. In the Adaptive Sign Gradient Method (ASGM), the vector $\text{SIGN}$ reallocates flow coordinates exclusively between the paths with the highest and lowest marginal prices, which structurally mimics a Pairwise Frank-Wolfe update step on the simplex constraint $\Omega$~\cite{lacoste2015global}. Leveraging the geometric properties of polytope boundaries analyzed by Lacoste-Julien and Jaggi~\cite{lacoste2015global}, the directional derivative along this pairwise rebalancing direction is bounded by the normalized optimal direction $\hat{e}_t$ as follows:
\begin{equation}
\label{eq:pfwidth_bound}
\nabla J(W_t)^\top \text{SIGN} \geq c \|\text{SIGN}\| \left( \nabla J(W_t)^\top \hat{e}_t \right).
\end{equation}

By the $\mu$-strong concavity of $J(W)$, the value at the global maximum $W^*$ satisfies $\delta_t \le \nabla J(W_t)^\top (W^* - W_t) - \frac{\mu}{2}\|W^* - W_t\|^2$. Letting $W^* - W_t = x \hat{e}_t$ with $x = \|W^* - W_t\|$, maximizing the right-hand side over $x \geq 0$ yields:

\begin{equation}
\label{eq:quadratic_max}
\delta_t \leq \max_{x \geq 0} \left\{ x (\nabla J(W_t)^\top \hat{e}_t) - \frac{\mu}{2} x^2 \right\} = \frac{(\nabla J(W_t)^\top \hat{e}_t)^2}{2\mu}.
\end{equation}

Equation~\eqref{eq:quadratic_max} implies $\nabla J(W_t)^\top \hat{e}_t \geq \sqrt{2\mu \delta_t}$. Substituting this lower bound into Equation~\eqref{eq:pfwidth_bound} gives $\nabla J(W_t)^\top \text{SIGN} \geq c \|\text{SIGN}\| \sqrt{2\mu \delta_t}$. Finally, squaring both sides yields the desired inequality $(\nabla J(W_t)^\top \text{SIGN})^2 \geq 2 \mu c^2 \|\text{SIGN}\|^2 \cdot \delta_t$, which completes the proof.

\end{proof}

\begin{lemma}[Step Size Lower Bound]
\label{lem:low_bound}
At iteration $t$, the Armijo-selected step size $\Delta_t$ satisfies
\begin{equation}
\begin{split}
\Delta_t &\geq \min\biggl\{\Delta_0,\;
\frac{2(1-\alpha)}{L} \cdot \frac{\nabla J(W_t)^\top \text{SIGN}}{\|\text{SIGN}\|^2}\biggr\},
\end{split}
\end{equation}
where $\Delta_0$ is the initial step size.
\end{lemma}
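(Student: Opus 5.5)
We will prove Lemma~\ref{lem:low_bound} by the standard backtracking argument, splitting into two cases according to whether the initial trial step is accepted. Throughout, write $g_t = \nabla J(W_t)^\top \text{SIGN} = x(g_{p^+}-g_{p^-}) \ge 0$ and $s = \|\text{SIGN}\|^2$ (which equals $2$ for the pairwise direction, but we keep it symbolic).

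\textbf{Step 1: a sufficient condition for acceptance.} Combine the smoothness bound of Lemma~\ref{lem:l_property} with the Armijo test of Lemma~\ref{lem:armijo}. For any $\Delta > 0$, Lemma~\ref{lem:l_property} gives
\begin{equation*}
J(W_t+\Delta\,\text{SIGN}) \ge J(W_t) + \Delta g_t - \tfrac{L}{2}\Delta^2 s .
\end{equation*}
The right-hand side is at least $J(W_t)+\alpha\Delta g_t$ whenever $\Delta g_t - \tfrac{L}{2}\Delta^2 s \ge \alpha \Delta g_t$, that is, whenever
\begin{equation*}
\Delta \le \bar\Delta_t := \frac{2(1-\alpha)}{L}\cdot\frac{g_t}{s}.
\end{equation*}
Hence every trial step $\Delta\le\bar\Delta_t$ passes the Armijo test.

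\textbf{Step 2: case analysis on the backtracking loop.} If $\Delta_0$ already satisfies the Armijo condition, then $\Delta_t=\Delta_0$ and the bound holds trivially. Otherwise the loop rejected at least one trial. Let $\Delta_t$ be the accepted step, so the previous trial $\Delta_t/\beta$ was rejected. By Step~1 the rejected trial must satisfy $\Delta_t/\beta > \bar\Delta_t$, which gives $\Delta_t > \beta\bar\Delta_t$. Taking both cases together,
\begin{equation*}
\Delta_t \ge \min\{\Delta_0,\beta\bar\Delta_t\}.
\end{equation*}

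\textbf{Main obstacle: the factor $\beta$.} Backtracking genuinely yields $\beta\bar\Delta_t$ rather than $\bar\Delta_t$ itself. We would handle this in one of two ways:
\begin{itemize}
\item absorb $\beta$ into the constant, noting that the stated bound is meant up to this factor and that the linear-rate argument only needs some positive multiple of $g_t/s$; or
\item interpret the line search as returning the largest acceptable step on the grid, and argue that the accepted step is within a factor $\beta$ of the largest step satisfying the Armijo condition.
\end{itemize}
We would state the bound with $\beta$ included and remark that it changes only constants downstream.

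A secondary concern is to confirm that $W_t+\Delta\,\text{SIGN}$ remains in $\Omega$, so that $L$ from Lemma~\ref{lem:l_property} applies. This requires capping $\Delta_0$ by $W_{p^-,t}$, the allocation on the path losing weight. The cap fits inside the $\min$ by replacing $\Delta_0$ with an effective initial step; the paper's pairwise Frank–Wolfe analogy treats such capped steps as "drop steps" (in the sense of Lacoste-Julien and Jaggi~\cite{lacoste2015global}), whose number is bounded separately.
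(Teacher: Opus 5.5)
Your argument is correct and is the standard backtracking argument. The paper uses the same two ingredients but combines them incorrectly, so your version is a repair, not an alternative route.

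The paper writes the chain $J(W_t)+\alpha\Delta g_t \le J(W_t+\Delta\,\text{SIGN}) \le J(W_t)+\Delta g_t-\frac{L}{2}\Delta^2\|\text{SIGN}\|^2$ and then "rearranges" to get $\Delta\ge\bar{\Delta}_t$. This has two problems:
\begin{itemize}
\item The second inequality is the reverse of Lemma~\ref{lem:l_property}.
\item Even taken at face value, the chain yields $\Delta\le\bar{\Delta}_t$, an upper bound, not $\Delta\ge\bar{\Delta}_t$.
\end{itemize}
It then asserts $\min\{\Delta_0,\bar{\Delta}_t\}$ with no factor $\beta$. The missing step is yours: apply the correctly oriented smoothness bound to the last \emph{rejected} trial $\Delta_t/\beta$.

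The factor $\beta$ you obtain cannot be removed. Take two single-edge paths with $f_p(y)=ay-\frac{b}{2}y^2$ on a range where they are increasing. Then $\nabla^2 J=-bx^2 I$ and $L=bx^2$, so $J(W_t+\Delta\,\text{SIGN})=J(W_t)+\Delta g_t-\frac{L}{2}\Delta^2\|\text{SIGN}\|^2$ holds exactly. The Armijo set is therefore precisely $[0,\bar{\Delta}_t]$. Backtracking from $\Delta_0>\bar{\Delta}_t$ returns the largest grid point $\Delta_0\beta^k\le\bar{\Delta}_t$, which generically lies in $(\beta\bar{\Delta}_t,\bar{\Delta}_t)$. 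That violates the lemma as stated.

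So keep your first option and state $\Delta_t\ge\min\{\Delta_0,\beta\bar{\Delta}_t\}$. Your second option, the grid reinterpretation, loses $\beta$ all the same. Downstream, in the gradient-limited case, the rate constant simply becomes $\rho=4\alpha\beta(1-\alpha)\mu c^2/L$.

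There are a few smaller points.
\begin{itemize}
\item \textbf{Repeat--until loop.} Algorithm~\ref{alg:asg} multiplies by $\beta$ before the first test, so the first trial step is $\beta\Delta_0$. Under that literal reading, your first case gives $\Delta_t=\beta\Delta_0$ and the bound becomes $\beta\min\{\Delta_0,\bar{\Delta}_t\}$, again only a constant.
\item \textbf{Feasibility.} Your remark flags a real omission in the paper. Lemma~\ref{lem:l_property}, and $f_{p^-}$ itself, only make sense if $W_t+\Delta\,\text{SIGN}\in\Omega$, i.e.\ $\Delta\le W_{p^-,t}$. Capping the initial step fixes this only when $W_{p^-,t}>0$. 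That requires choosing $p^-$ from the support of $W_t$, as with the away vertex in pairwise Frank--Wolfe. With the paper's $\arg\min$ over all paths one can have $W_{p^-,t}=0$, and then no positive lower bound on $\Delta_t$ exists.
\item \textbf{Strict positivity.} You need $g_t>0$, not merely $g_t\ge 0$, for $\bar{\Delta}_t>0$ and for the backtracking loop to terminate.
\end{itemize}
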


\begin{proof}
From Lemma~\ref{lem:l_property} and the Armijo condition (Lemma~\ref{lem:armijo}), we have:
\begin{equation}
\begin{aligned}
J(W_t) + \alpha \Delta \cdot \nabla J(W_t)^\top \text{SIGN}
&\leq J(W_t + \Delta \cdot \text{SIGN}) \\
&\leq J(W_t) + \Delta \cdot \nabla J(W_t)^\top \text{SIGN} \nonumber \\
&\quad - \frac{L}{2} \Delta^2 \|\text{SIGN}\|^2.
\end{aligned}
\end{equation}
Rearranging, we obtain
\begin{equation}
\Delta \geq \frac{2(1-\alpha)}{L} \cdot \frac{\nabla J(W_t)^\top \text{SIGN}}{\|\text{SIGN}\|^2}.
\end{equation} Since the backtracking procedure starts from $\Delta_0$ and reduces by $\beta$ until the condition is met, the final step size is at least the minimum of $\Delta_0$ and this lower bound.
\end{proof}

\subsubsection{Main Proof: Linear Convergence Rate}

Using the Armijo condition (Lemma~\ref{lem:armijo}), the improvement in the objective value is:
\begin{equation}
\begin{split}
J(W_{t+1}) - J(W_t) &= J(W_t + \Delta_t \cdot \text{SIGN}) - J(W_t) \\
&\geq \alpha \Delta_t \cdot \nabla J(W_t)^\top \text{SIGN}.
\end{split}
\end{equation}

Substituting the lower bound for $\Delta_t$ from Lemma~\ref{lem:low_bound}:
\begin{equation}
\begin{split}
J(W_{t+1}) - J(W_t) &\geq \alpha \cdot \min\biggl\{\Delta_0,\\
&\qquad \frac{2(1-\alpha)}{L} \cdot \frac{\nabla J(W_t)^\top \text{SIGN}}{\|\text{SIGN}\|^2}\biggr\} \\
&\qquad \cdot \nabla J(W_t)^\top \text{SIGN}.
\end{split}
\end{equation}

\begin{figure*}[t] 
  \centering
  \begin{subfigure}[b]{0.49\linewidth} 
    \centering
    \includegraphics[width=\linewidth, height=4.5cm, keepaspectratio]{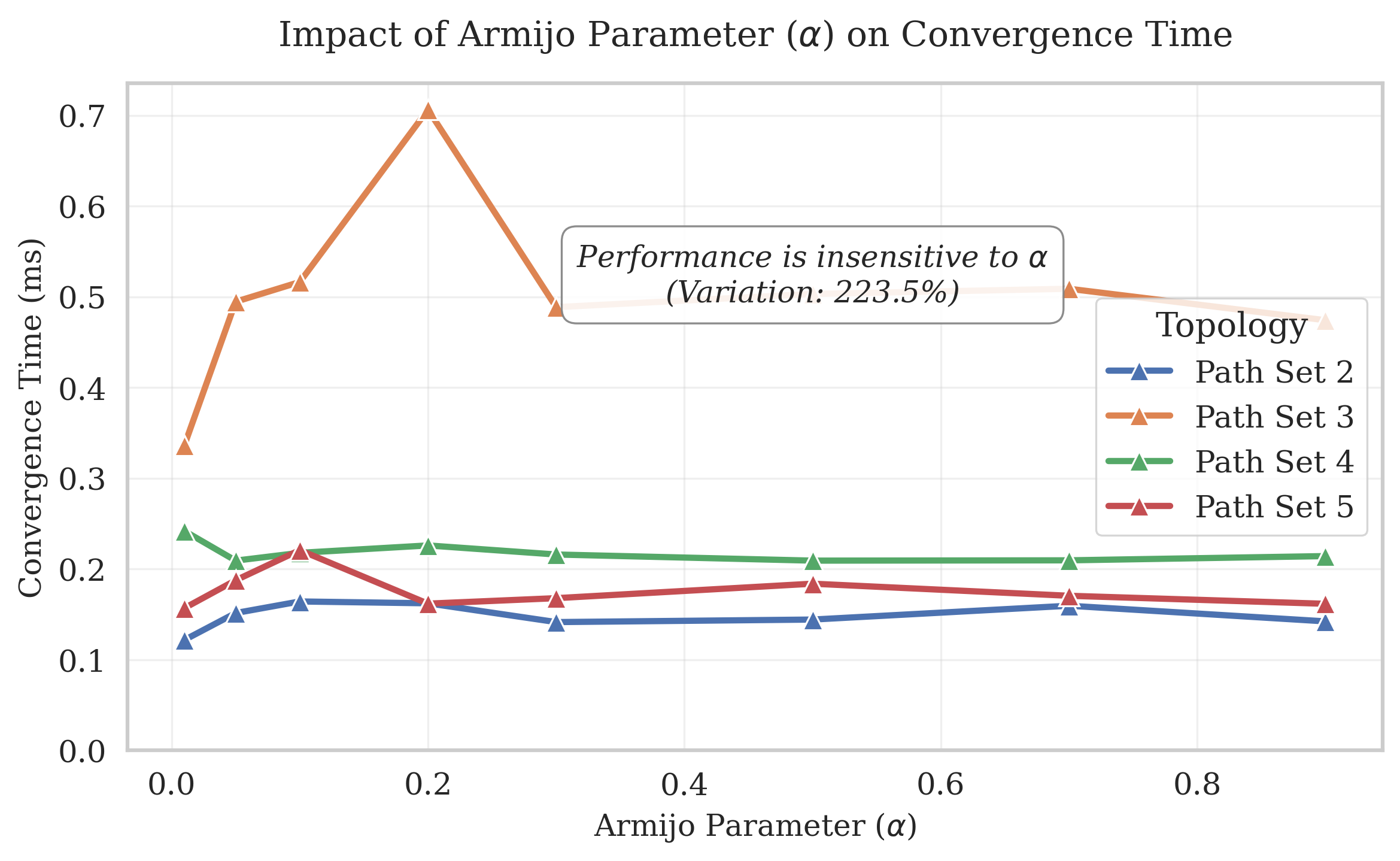}
    \caption{Robustness to Armijo parameter $\alpha$}
    \label{fig:alpha_sensitivity}
  \end{subfigure}
  \hfill
  \begin{subfigure}[b]{0.49\linewidth}
    \centering
    \includegraphics[width=\linewidth, height=4.5cm, keepaspectratio]{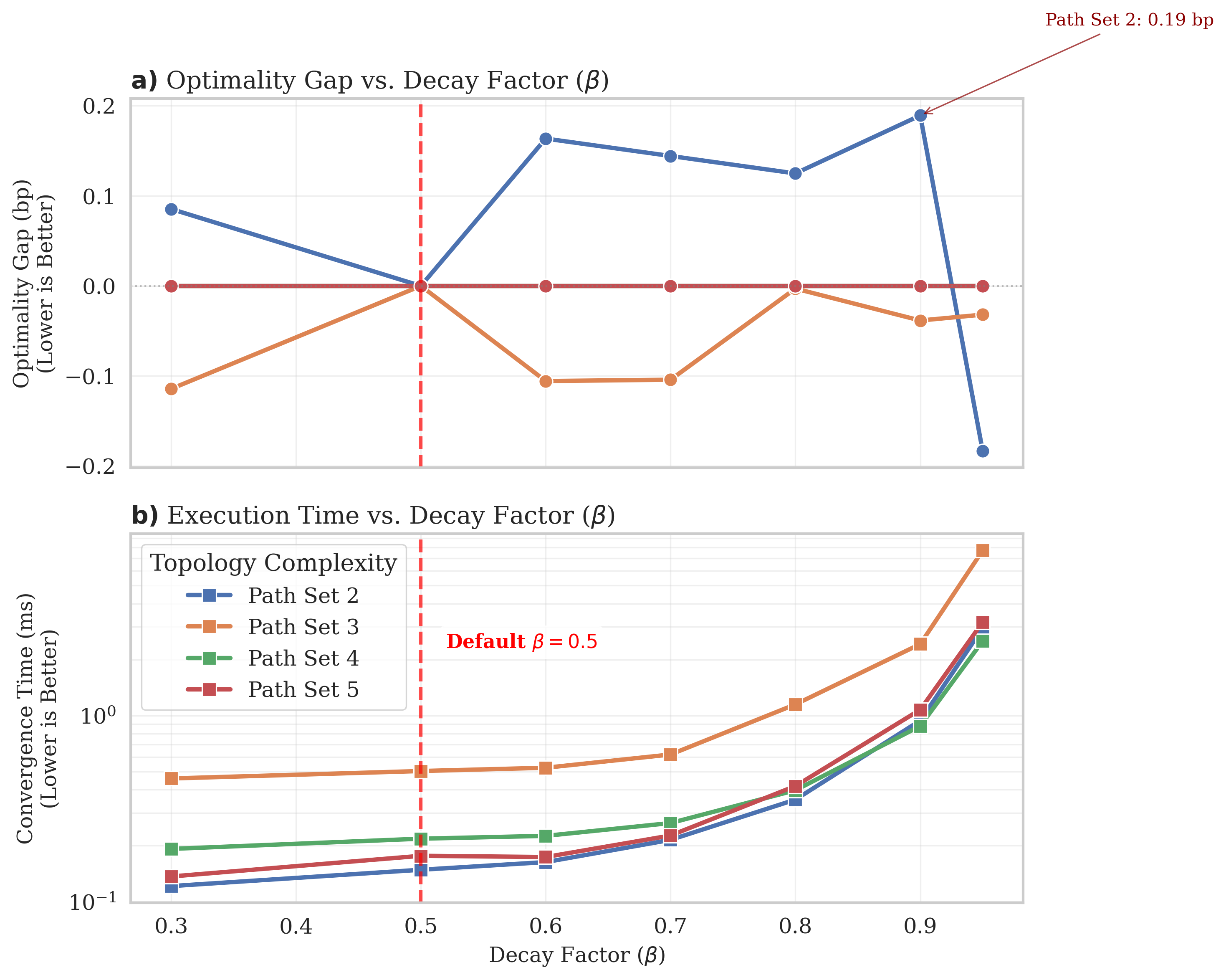}
    \caption{Trade-off analysis of decay factor $\beta$}
    \label{fig:beta_ablation}
  \end{subfigure}
  \caption{Hyperparameter Ablation Study. (a) Execution time remains flat across $\alpha \in [0.01, 0.9]$, indicating the algorithm is insensitive to the acceptance threshold. (b) While higher $\beta$ marginally reduces the optimality gap, it causes an exponential increase in convergence time. The chosen $\beta=0.5$ strikes the optimal balance.}
  \label{fig:ablation_study}
\end{figure*}

As the algorithm approaches the optimum (Phase 2), $\Delta_t$ is determined by the gradient term. Applying the bound from Lemma~\ref{lem:strong_concavity_bound}:
\begin{equation}
\begin{aligned}
J(W_{t+1}) - J(W_t) &\geq \alpha \cdot \frac{2(1-\alpha)}{L} \cdot \frac{(\nabla J(W_t)^\top \text{SIGN})^2}{\|\text{SIGN}\|^2} \\
& \geq \alpha \cdot \frac{2(1-\alpha)}{L \|\text{SIGN}\|^2} \cdot \left( 2 \mu c^2 \|\text{SIGN}\|^2 \delta_t \right) \\
& = \left(\frac{4\alpha(1-\alpha)\mu c^2}{L}\right) \delta_t.
\end{aligned}
\end{equation}

Setting $\rho = \frac{4\alpha(1-\alpha)\mu c^2}{L}$, where $c = PWidth(\Omega)$ represents the fixed geometric pyramidal width of the routing simplex domain, and using $\delta_t = J^* - J(W_t)$, we have:
\begin{equation}
\delta_{t+1} = \delta_t - (J(W_{t+1}) - J(W_t)) \leq (1 - \rho) \delta_t.
\end{equation}

This recursive relation directly implies $\delta_t \leq (1-\rho)^t \delta_0$, establishing a \textbf{linear convergence rate}. This rate ensures that PRIME converges efficiently despite extreme value differences between tokens like BTC and meme coins.

\subsection{Connection to Frank-Wolfe–Type Methods}

As in the main text, the analysis builds on the framework of Lacoste-Julien and Jaggi~\cite{lacoste2015global}. ASGM's sign-based update mechanism can be viewed as a Frank-Wolfe--style method: at each iteration we identify the two paths with the largest marginal price difference and rebalance between them. The key difference from standard Frank-Wolfe is that ASGM uses only the \emph{sign} of the gradient difference, making it robust to extreme scale disparities.

The linear convergence rate comes from the strong concavity of the objective function. In standard Frank-Wolfe, convergence is typically sublinear ($O(1/t)$) for general convex objectives. However, for strongly concave objectives on a simplex, the duality gap decreases geometrically, enabling linear convergence. The Armijo backtracking ensures that each step achieves sufficient progress, maintaining the linear rate even when the Lipschitz constant varies across the domain.


\section{Ablation Study: Minimum Learning Rate and Decay Factor}
\label{app:ablation_lr_decay}

This appendix presents the ablation study and sensitivity analysis for the ASGM hyperparameters: the \textbf{Armijo condition parameter} $\alpha$ (sufficient decrease parameter) and the \textbf{step size decay factor} $\beta$. Both control the backtracking line search in Algorithm~\ref{alg:asg}.

\subsection{Experimental Setup}

We evaluate ASGM across varying $(\alpha, \beta)$ combinations on real-world Ethereum mainnet data on Block 22918466. The evaluation used a  WETH swap across varying topological complexities represented by Path Sets 2–5 (disjoint routing paths). We perform a grid search over $\alpha \in [0.01, 0.9]$ and $\beta \in [0.3, 0.95]$, with input amount fixed at 1,000 WETH. We report optimality gap (measured in basis points, bp) and execution time (ms). Defaults in the main experiments are $\alpha = 10^{-4}$ and $\beta = 0.5$.

\subsection{Hyperparameter Sensitivity Analysis}

To validate the robustness of the proposed ASGM algorithm, we conducted a comprehensive ablation study on the Armijo condition parameter $\alpha$ and the step size decay factor $\beta$. The results, illustrated in Fig.~\ref{fig:ablation_study}, confirm that ASGM effectively balances precision and latency without requiring fine-grained parameter tuning.

\spara{Sensitivity to Armijo Parameter $\alpha$} The parameter $\alpha \in (0, 1)$ determines the strictness of the line search acceptance criterion. As shown in Fig.~\ref{fig:ablation_study}(a), varying $\alpha$ from $0.01$ to $0.9$ results in \textbf{negligible variations} in convergence time and zero variance in solution quality across all path sets. 
For instance, in Path Set 5 (complex topology), the execution time remains stable around the baseline regardless of $\alpha$. This indicates that for the convex objective function of CFMM routing, the sufficient decrease condition is easily satisfied. Consequently, the algorithm is insensitive to $\alpha$, and the standard choice of $\alpha=10^{-4}$ is robust for all tested scenarios.

\spara{Sensitivity to Decay Factor $\beta$} The decay factor $\beta \in (0, 1)$ controls the granularity of the backtracking line search, representing a trade-off between solution precision and computational speed.
\begin{itemize}
    \item \textbf{Optimality Gap (Fig.~\ref{fig:ablation_study}(b) Top):} For simpler topologies (Path Sets 1, 4, 5), ASGM finds the global optimum (Gap $\approx 0.00$ bp) across all $\beta$ values. For Path Set 2, increasing $\beta$ from $0.5$ to $0.9$ yields a marginal gain of $<0.2$ bp ($0.002\%$), which is theoretically negligible.
    \item \textbf{Execution Latency (Fig.~\ref{fig:ablation_study}(b) Bottom):} As $\beta \to 1$, the step size decreases more slowly, forcing the algorithm to perform significantly more iterations to satisfy the Armijo condition. The data shows that increasing $\beta$ from $0.5$ to $0.95$ causes latency to surge by over \textbf{20$\times$} (e.g., from $\approx 0.15$ ms to $>3.0$ ms).
\end{itemize}

The ablation study demonstrates that ASGM operates within a distinct ``sweet spot.'' The selected default $\beta=0.5$ achieves near-optimal profitability (capturing $>99.998\%$ of theoretical maximum extraction) while maintaining sub-millisecond latency essential for real-time block building. Deviating to higher $\beta$ offers diminishing returns in pricing accuracy at a prohibitive computational cost.
\balance

\end{document}